\theoremstyle{remark}
\newtheorem{thm}{Theorem}
\newtheorem{defn}[thm]{Definition}
\newtheorem{assu}[thm]{Assumption}
\newtheorem{algo}[thm]{Algorithm}
\newtheorem{lem}[thm]{Lemma}
\newtheorem{lrop}[thm]{Proposition}
\newtheorem{coro}[thm]{Corollary}
\theoremstyle{remark}
\newtheorem{rmk}[thm]{Remark}
\newcounter{MYtempeqncnt}
\title{Data-driven system analysis of nonlinear systems using polynomial approximation (extended version)}
\author{Tim Martin and Frank Allg{\"o}wer
	\thanks{Frank Allgöwer thanks the funding by the Deutsche Forschungsgemeinschaft (DFG, German Research Foundation) under Germany's Excellence Strategy - EXC 2075 - 390740016 and under grant 468094890. We acknowledge the support by the Stuttgart Center for Simulation Science (SimTech). (Corresponding author: Tim Martin).}
	\thanks{The authors are with the University of Stuttgart, Institute for Systems Theory and Automatic Control, 70569, Stuttgart, Germany. (e-mail: tim.martin@ist.uni-stuttgart.de; frank.allgower@ist.uni-stuttgart.de).}
}
\begin{document}

\IEEEoverridecommandlockouts

\IEEEpubid{\begin{minipage}{\textwidth}\ \\[40pt] \copyright 2023 IEEE. Personal use of this material is permitted. Permission from IEEE must be obtained for all other uses, in any current or future media,
		including reprinting/republishing this material for advertising or promotional purposes, creating new collective works, for resale or redistribution to servers
		or lists, or reuse of any copyrighted component of this work in other works.\end{minipage}}

\maketitle
\pagestyle{empty}	


\begin{abstract}
	In the context of data-driven control of nonlinear systems, many approaches lack of rigorous guarantees, call for nonconvex optimization, or require knowledge of a function basis containing the system dynamics. To tackle these drawbacks, we establish a polynomial representation of nonlinear functions based on a polynomial sector by Taylor's theorem and a set-membership for Taylor polynomials. The latter is obtained from finite noisy samples. By incorporating the measurement noise, the error of polynomial approximation, and potentially given prior knowledge on the structure of the system dynamics, we achieve computationally tractable conditions by sum of squares relaxation to verify dissipativity and incremental dissipativity of nonlinear dynamical systems with rigorous guarantees. The framework is extended by combining multiple Taylor polynomial approximations which yields a less conservative piecewise polynomial system representation. The proposed approach is applied for numerical and experimental examples. There it is compared to a least-squares-error model including knowledge from first principle.	  
\end{abstract}

\begin{IEEEkeywords}
	Data-driven system analysis, nonlinear systems, polynomial approximation, dissipativity.
\end{IEEEkeywords}

\section{Introduction}\label{Intrduction}

\IEEEPARstart{O}{btaining} a controller for nonlinear systems by nonlinear controller design techniques~\cite{Khalil} usually requires to retrieve a precise model. To this end, first principles can be applied, which is however often time consuming and calls for expert knowledge on the physical phenomena. Alternatively, system identification provides the possibility to derive models of the system from measured trajectories. In case the structure of the system is known, a linearly parametrized model with fixed basis functions reduces the identification to a parameter estimation problem. If the structure is unknown, then an inference on the nonlinear dynamics is more involved. To recover closed-loop stability from controller design techniques with inherent closed-loop guarantees, the model error is required. However, this is an active research field even for the identification of linear time-invariant (LTI) systems \cite{SysIdLin}.\\\indent
Thereby, we are interested in data-driven system analysis where control-theoretic properties are determined from measured system trajectories. This kind of system properties, such as dissipativity \cite{DissiWillems}, provides insight into the system and facilitates a stabilizing controller design by well-investigated feedback laws \cite{Khalil} without knowledge of an explicit model of the system.\\\indent
For LTI systems, \cite{Maupong} and \cite{OneShot} treat the data-driven determination of dissipativity by Willems' fundamental lemma, a trajectory-based non-parametric system representation. \cite{AnneDissi} and \cite{WaardeDissi} determine dissipativity based on data informativity using noisy measurements. While there exist approaches tailored for certain classes of nonlinear systems, comprising methods for general nonlinear systems do not exist. For example, \cite{MartinDissi} generalizes, among others, the results from \cite{AnneDissi} for polynomial systems. For general nonlinear systems, \cite{AnneGaussian} proposes Gaussian process optimization for data-driven system analysis including statistical guarantees. Furthermore, \cite{MontenbruckLipschitz} and \cite{MartinGraphAppro} present an offline and online ansatz for estimating, e.g., the $\mathcal{L}_2$-gain over a finite time horizon exploiting Lipschitz inferences \cite{KinkyInf}. But in fact, this non-parametric characterization for a continuous-time system has intrinsically the drawbacks that a data-based stability analysis of an equilibrium remains denied regardless of the amount of samples. Moreover, an obscure neighbourhood of an equilibrium must be excluded when verifying dissipativity \cite{MontenbruckLipschitz}. \cite{AnneGaussian,MontenbruckLipschitz,MartinGraphAppro} also require thousands of input-output trajectories which prevents their application in practice. Due to the conservatism of the S-procedure, Lipschitz inference is also not suitable for dissipativity verification by linear matrix inequalities (LMI) as indicated in \cite{MartinGraphAppro}.\\\indent
For that reason, we established in\cite{MartinNL} a novel data-based representation for general nonlinear functions relying on Taylor polynomial (TP) approximation. This kind of approximation is commonly exploited in control theory \cite{DifferentialSys} and application, e.g., to linearize a system dynamics. In \cite{MartinNL}, we first determine by Taylor's theorem a polynomial sector for a nonlinear function. In a second step, we formulate a set-membership for the TP of the unknown function from noisy samples. Together with the polynomial sector, this results in a data-based envelope that contains the graph of the ground-truth function. Therefore, this envelope can be leveraged to analyze control-theoretic properties from noise-corrupted input-state-velocity measurements with guarantees. Due to the polynomial characterization of the envelope, the system property verification boils down to a convex semi-definite programming (SDP) via sum of squares (SOS) relaxation \cite{SOSTutorial}. Thus, \cite{MartinNL} shows the verification of dissipativity and its application for two numerical examples. However, the data-based representation is also suitable to determine optimal integral quadratic constraints by adapting the framework of \cite{MartinNLMSOS} and to compute a state-feedback controller with closed-loop stability and performance guarantees \cite{Gaussian_noise}.\\\indent
The contribution of this work is to provide a comprising view on the novel data-based representation by TPs including discussions on its conservatism and the selection of several degrees of freedom. Furthermore, we provide throughout this article necessary results for the successful application of this approach in practice. More specifically, a purely data-based representation for general nonlinear dynamics is often conservative, and therefore jeopardizes a meaningful inference on system properties. For that reason, we extend the ellipsoidal outer approximation from \cite{MartinNLMSOS} and \cite{Ellipsoid_DePersis} by incorporating additionally prior knowledge on the structure of the system dynamics. To further improve the accuracy of the data-driven inference, we combine multiple TP approximations to derive a piecewise polynomial representation similar to a spline approximation. From system identification literature \cite{Poly_Spline}, it is well-known that splines exhibit better approximations than polynomials. Thus, a piecewise polynomial representation significantly improves the inference on system properties. Although this corresponds to a nonlinear approximation, the dissipativity verification can still be executed by SOS optimization. A further contribution of this article is the probabilistic validation of assumptions from data with probabilistic guarantees. Finally, we show the application of the approach for a nonlinear experimental example with state constraints using extensively the mentioned improvements.\\\indent 
While dissipativity \cite{DissiWillems} has played a crucial role in nonlinear system analysis and control \cite{Khalil} since decades, the interests in incremental and differential dissipativity for nonlinear controller design have risen recently, compare \cite{DifferentialSys,IncrementalRoland, ContractionProcess}. Whereas incremental dissipativity properties have not been determined by TP approximation and from data yet, we present their verification by our data-driven representation together with a numerical example. Thereby, we show that our TP representation is not limited to verify dissipativity but can easily be adapted for a large range of control problems.\\\indent
The outline of this article is the following. After introducing some notation in Section~\ref{SecNot}, we elaborate in Section~\ref{SecNPModel} the polynomial data-driven representation of an unknown nonlinear function based on TP approximation. Moreover, Section~\ref{SecNPModel} includes a validation procedure for the essential assumptions. On this basis, Section~\ref{SecDissi} establishes the verification of dissipativity properties for nonlinear dynamical systems by the data-driven inference from a single TP and multiple TPs. Furthermore, this section presents the application for an experimental example. In Section~\ref{SecIncr}, we extend the framework to determine incremental dissipativity and show its application on a numerical example. Section~\ref{SecConclusion} concludes the article. 
\IEEEpubidadjcol

\section{Notation and preliminaries}\label{SecNot}

Let $||v||_2$ be the Euclidean norm of a vector $v\in\mathbb{R}^{n}$, $I_n$ be the $n\times n$ identity matrix, and $0$ be a zero matrix of suitable dimensions. For some matrices $A_1$ and $A_2$ of suitable dimensions, we use the abbreviations $\star^T A_2\cdot A_1=A_1^TA_2A_1$ and $\text{diag}(A_1\big| A_2)=	\begin{bmatrix}\begin{array}{c|c}	A_1 & 0 \\\hline 0 & A_2 	\end{array}	\end{bmatrix}$.\\\indent
Furthermore, let $\mathbb{R}[x]$ be the set of all real polynomials $p$ in $x=\begin{bmatrix}
x_1 & \cdots & x_n\end{bmatrix}^T\in\mathbb{R}^n$, i.e., 
\begin{equation*}
p(x)=\sum_{\alpha\in\mathbb{N}^n,|\alpha|\leq d} a_\alpha x^\alpha,
\end{equation*}
with vectorial indices $\alpha=\begin{bmatrix}\alpha_1 & \cdots & \alpha_n\end{bmatrix}^T\in\mathbb{N}^n$, real coefficients $a_\alpha\in\mathbb{R}$, monomials $x^\alpha=x_1^{\alpha_1}\cdots x_n^{\alpha_n}$, $|\alpha|=\alpha_1+\cdots+\alpha_n$, and the degree $d$ of $p$. Analogously, the set of all $m$-dimensional polynomial vectors is denoted by $\mathbb{R}[x]^m$ and $m\times n$-polynomial matrices by $\mathbb{R}[x]^{m\times n}$, where each entry is an element of $\mathbb{R}[x]$. By convention, the degree of a polynomial vector and matrix is the largest degree of their elements.\\\indent  
A quadratic polynomial matrix $P\in\mathbb{R}[x]^{n\times n}$ with even degree is an SOS matrix, or SOS polynomial for $n=1$, if a matrix $Q\in\mathbb{R}[x]^{{m}\times {n}}$ exists with $P=Q^TQ$. Here we write $\text{SOS}[x]^{{n}\times {n}}$ as the set of all ${n}\times {n}$ SOS matrices. By the square matricial representation \cite{SOSDecomp} of SOS matrices, the verification whether a polynomial matrix is an SOS matrix boils down to an SDP with an LMI constraint. Hence, this verification is computationally tractable. To incorporate the data-based representation in the context of system analysis, a generalized S-procedure will be essential.
\begin{lrop}[SOS relaxation]\label{SOSRelaxation}
	A polynomial $p\in\mathbb{R}[x]$ is not negative for all $x\in\{x\in\mathbb{R}^n:c_1(x)\geq0,\dots,c_k(x)\geq0\}$ with $c_i\in\mathbb{R}[x]$ if there exist SOS polynomials $t_i\in\text{SOS}[x],i=1,\dots,k,$ such that $p-\sum_{i=1}^{k}t_ic_i\in\text{SOS}[x]$.
\end{lrop}
\begin{proof}
	A proof based on the Positivstellensatz can be found in \cite{ProofProp} (Lemma 2.1).
\end{proof}

Since not every non-negative polynomial is an SOS polynomial, Proposition~\ref{SOSRelaxation} is indeed a relaxation and corresponds to a non-tight description of non-negative polynomials in general.

\section{Data-based representation for nonlinear functions}\label{SecNPModel}

In this section, we introduce our data-based polynomial representation for an unknown nonlinear function based on TP approximation and its derivation from a finite set of noisy samples. This representation will be crucial in the second part of this article to determine dissipativity for dynamical systems from data.

\subsection{Problem setup}\label{SecProblemSetup}

Throughout this section, we study the $k+1$ times continuously differentiable function $f:\mathbb{X}\subset\mathbb{R}^{n_x}\rightarrow\mathbb{R}^{n_y}$ with
\begin{equation*}\label{TrueFunction}
y=f(x)=\begin{bmatrix}f_1(x)&\cdots&f_{n_y}(x)\end{bmatrix}^T
\end{equation*}
and the compact and convex domain
\begin{equation}\label{Domain}
\mathbb{X}=\{x\in\mathbb{R}^{n_x}: w_i(x)\leq 0,\  w_i\in\mathbb{R}[x], i=1,\dots,n_w\}.
\end{equation}
Hence, $\mathbb{X}$ is described by polynomial inequalities. Suppose the function $f$ is unknown but noisy samples
\begin{equation}\label{DataSetFunc}
\{(\tilde{y}_i,\tilde{x}_i)_{i=1,\dots,S}\}
\end{equation}
with $\tilde{y}_i=f(\tilde{x}_i)+\tilde{d}_i$ and unknown disturbance $\tilde{d}_i$ are given. Then the goal of this section is to conclude from the data on an envelope that contains the graph of the function $f$, i.e., $G(f)=\{(x,f(x))\in\mathbb{X}\times\mathbb{R}^{n_y}\}$. \\\indent
To infer on the nonlinear function $f$ from data, we demand some knowledge on the maximal rate of variation of $f$ and the boundedness of the disturbance.

\begin{assu}[Rate of variation]\label{AssBoundDeri}
	Upper bounds $M_{i,\alpha}\geq0,i=1,\dots,n_y,|\alpha|=k+1,$ on the magnitude of each $(k+1)$-th order partial derivative are known
	\begin{align*}
	\bigg|\bigg|\frac{\partial^{k+1}f_i(x)}{\partial x^{\alpha} }\bigg|\bigg|_2\leq M_{i,\alpha},\quad \forall x\in\mathbb{X}.
	\end{align*}
\end{assu} 
\vspace{0.3cm}
\begin{assu}[Pointwise bounded noise]\label{AssNoiseBound}
	The disturbance of the measured data \eqref{DataSetFunc} exhibits a pointwise bound $||\tilde{d}_{i}||_2\leq\epsilon_i(\tilde{y}_i,\tilde{x}_i)$ with $\epsilon_i(\tilde{y}_i,\tilde{x}_i)>0$ for $i=1,\dots,S$.
\end{assu} 

Contrary to, e.g., \cite{Snaizer}, here the structure of $f$ is unknown. Therefore,  we require to bound the rate of variation of the nonlinear function by Assumption~\ref{AssBoundDeri}. Otherwise, an inference from finite samples on a superset of $G(f)$ would be unbounded. Indeed, the nonlinear function could take arbitrarily large values arbitrarily close to a sample without Assumption~\ref{AssBoundDeri}. Hence, a data-based inference on $f(x)$ for an unseen $x$ is not possible. For the same purpose, \cite{Milanese} calls for the Lipschitz constant of $f$ and \cite{MartinDissi} for an upper bound on the degree of the polynomials of the system dynamics. Since the bounds $M_{i,\alpha}$ are usually not known, we refer to Section~\ref{SecUpperBound} for their estimation from data.\\\indent 
The notion of bounded noise in Assumption~\ref{AssNoiseBound} includes noise characterizations with bounded amplitude $\epsilon_i(\tilde{y}_i,\tilde{x}_i)=\epsilon$ and fixed signal-to-noise-ratio $\epsilon_i(\tilde{y}_i,\tilde{x}_i)=\tilde{\epsilon}\,||\tilde{y}_i||_2$. Both characterizations are frequently supposed in data-driven control \cite{vanWaarde} and system analysis \cite{AnneDissi} but also in set-membership identification \cite{Milanese}, adaptive control \cite{AdaptiveC}, and robust model predictive control \cite{RMPC}. If the disturbance $\tilde{d}_i$ is Gaussian distributed, then we refer to the Bayesian treatment in \cite{Gaussian_noise} that is compatible with the dissipativity verification in Section~\ref{SecDissi}.


\subsection{Data-driven envelope by TP approximation}\label{SecModel}

To conclude on an envelope that contains the graph of $f$, we will deduce two ingredients. First, a polynomial approximation error bound between $f$ and its TP is obtained by \textit{Taylor's theorem} \cite{Taylor}. Second, we characterize the set of all polynomials consistent with the polynomial approximation error of TPs for the given data. This set is required as it contains the unknown TP of $f$. \\\indent 
From Taylor's theorem \cite{Taylor}, each element of $f$ can be written as $f_i(x)=T_k(\omega)[f_i(x)]+R_k(\omega)[f_i(x)]$ for some $\omega\in\mathbb{X}$ with the TP
\begin{align*}
T_k(\omega)[f_i(x)]=\sum_{|\alpha|=0}^{k}\frac{1}{\alpha !}\frac{\partial^{|\alpha|}f_i(\omega)}{\partial x^{\alpha}}\left(x-\omega\right)^\alpha={a_i^*}^Tz(x),
\end{align*}
which corresponds to the truncated Taylor series. Here $\alpha!=\alpha_1!\cdots\alpha_{n_x}!$, the vector ${a_i^*}\in\mathbb{R}^{n_z}$ summarizes the unknown coefficients $\frac{\partial^{|\alpha|}f_i(\omega)}{\partial x^{\alpha}}$ of the TP, and $z(x)\in\mathbb{R}[x]^{n_z}$ summarizes all polynomials $\frac{1}{\alpha !}\left(x-\omega\right)^\alpha$ for $|\alpha|=0,\dots,k$. Moreover, for all $x\in\mathbb{R}^{n_x}$ there exists a $\nu\in[0,1]$ such that the Lagrange remainder is given by
\begin{align*}
R_k(\omega)[f_i(x)]{=}\sum_{|\alpha|=k+1}\frac{1}{\alpha !}\frac{\partial^{k+1}f_i(\omega+\nu(x-\omega))}{\partial x^{\alpha}}\left(x-\omega\right)^\alpha.
\end{align*}

Since $\nu$ is deduced from the mean value theorem, its existence is known but usually not its actual value for a specific $x$. Therefore, the next lemma provides two upper bounds on the remainder to circumvent the computation of $\nu$.

\begin{lem}[Bounds on the remainder]\label{LemBound}
	Under Assumption~\ref{AssBoundDeri} and convexity of $\mathbb{X}$, $(R_k(\omega)[f_i(x)])^2\leq R^{\text{abs}}_k(\omega)[f_i(x)]\leq R^{\text{poly}}_k(\omega)[f_i(x)]$ for all $x\in\mathbb{X}$ with
	\begin{align}
	&R^{\text{abs}}_k(\omega)[f_i(x)]= \left(\sum_{|\alpha|=k+1}\frac{M_{i,\alpha}}{\alpha !}||\left(x-\omega\right)^\alpha||_2\right)^2,\label{AbsRemainderBound}\\
	&R^{\text{poly}}_k(\omega)[f_i(x)]=\sum_{|\alpha|=k+1}\kappa_i\frac{M_{i,\alpha}^2}{\alpha !^2}\left(x-\omega\right)^{2\alpha},\label{PolyRemainderBound}	
	\end{align}
	and $\kappa_i\in\mathbb{N}$ is equal to the number of $M_{i,\alpha}\neq0$ for $|\alpha|=k+1$.
\end{lem}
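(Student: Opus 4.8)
The plan is to work directly from the Lagrange form of the remainder displayed above the lemma, namely $R_k(\omega)[f_i(x)] = \sum_{|\alpha|=k+1}\frac{1}{\alpha!}\frac{\partial^{k+1}f_i(\omega+\nu(x-\omega))}{\partial x^\alpha}(x-\omega)^\alpha$, and to chain together the triangle inequality, Assumption~\ref{AssBoundDeri}, and the Cauchy--Schwarz inequality. Since each monomial $(x-\omega)^\alpha$ and each partial derivative here is scalar-valued, the norm $\|\cdot\|_2$ reduces to an absolute value and the entire remainder is a single real number whose square is to be bounded.

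First I would establish the left inequality $(R_k(\omega)[f_i(x)])^2\leq R_k^{\text{abs}}(\omega)[f_i(x)]$. Applying the triangle inequality to the finite sum gives $|R_k(\omega)[f_i(x)]|\leq\sum_{|\alpha|=k+1}\frac{1}{\alpha!}\big|\tfrac{\partial^{k+1}f_i(\cdot)}{\partial x^\alpha}\big|\,|(x-\omega)^\alpha|$. The intermediate evaluation point $\omega+\nu(x-\omega)$ lies on the segment joining $\omega$ and $x$; provided it is an admissible point of $\mathbb{X}$ (see the last paragraph), Assumption~\ref{AssBoundDeri} replaces each derivative magnitude by the constant $M_{i,\alpha}$, yielding $|R_k(\omega)[f_i(x)]|\leq\sum_{|\alpha|=k+1}\frac{M_{i,\alpha}}{\alpha!}\|(x-\omega)^\alpha\|_2$. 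Both sides are non-negative, so squaring preserves the inequality and reproduces $R_k^{\text{abs}}(\omega)[f_i(x)]$ exactly.

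The second inequality $R_k^{\text{abs}}(\omega)[f_i(x)]\leq R_k^{\text{poly}}(\omega)[f_i(x)]$ is the crux. Writing $b_\alpha:=\frac{M_{i,\alpha}}{\alpha!}\|(x-\omega)^\alpha\|_2\geq0$ and noting $\|(x-\omega)^\alpha\|_2^2=(x-\omega)^{2\alpha}$, the claim becomes $\big(\sum_{|\alpha|=k+1}b_\alpha\big)^2\leq\kappa_i\sum_{|\alpha|=k+1}b_\alpha^2$. This is precisely Cauchy--Schwarz applied to the all-ones vector of length $\kappa_i$ and the vector $(b_\alpha)$: the indices with $M_{i,\alpha}=0$ contribute $b_\alpha=0$, so both sums effectively range over the $\kappa_i$ indices with $M_{i,\alpha}\neq0$, and $\big(\sum_\alpha 1\cdot b_\alpha\big)^2\leq\big(\sum_\alpha 1^2\big)\big(\sum_\alpha b_\alpha^2\big)=\kappa_i\sum_\alpha b_\alpha^2$. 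Combining the two steps gives the full chain stated in the lemma.

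The only point requiring genuine care is the passage through Assumption~\ref{AssBoundDeri}: the bound $M_{i,\alpha}$ is posited for all $x\in\mathbb{X}$, whereas the remainder is evaluated at the segment point $\omega+\nu(x-\omega)$, so one needs this point to inherit the bound. This is guaranteed whenever $\mathbb{X}$ is convex, or more generally whenever the derivative bound is valid on a convex set containing $\mathbb{X}$; the same convexity is implicitly what makes the Lagrange expansion applicable in the first place. Everything else is a direct application of standard inequalities, and identifying $\kappa_i$ as the count of nonzero $M_{i,\alpha}$ is exactly what renders the Cauchy--Schwarz constant sharp rather than the cruder $\binom{n_x+k}{k+1}$.
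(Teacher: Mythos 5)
Your proof is correct and takes essentially the same route as the paper: the first bound is the triangle inequality combined with Assumption~\ref{AssBoundDeri}, and your Cauchy--Schwarz step with the all-ones vector of length $\kappa_i$ is exactly the paper's ``multiply out the square and use $2\|vw\|_2\leq v^2+w^2$'' argument packaged into a single standard inequality. Your observation that the segment point $\omega+\nu(x-\omega)$ must lie in $\mathbb{X}$ (i.e., that convexity of $\mathbb{X}$ is implicitly needed both for Taylor's theorem and for invoking the derivative bound) is a legitimate point of care that the paper's one-line proof leaves unstated.
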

\begin{proof}
	Since $\omega+\nu(x-\omega)\in\mathbb{X}$ for $\nu\in[0,1]$ and for convex set $\mathbb{X}$, $R^{\text{abs}}_k(\omega)[f_i(x)]$ follows by taking the absolute value of each summand of $(R_k(\omega)[f_i(x)])^2$. Subsequently, this implies \eqref{PolyRemainderBound} by multiplying out the square and the fact that $2vw\leq v^2+w^2,\forall v,w\in\mathbb{R}$.	
\end{proof}

By Lemma~\ref{LemBound}, we can conclude on a polynomial characterized sector for $f$ by replacing the non-polynomial remainder through the polynomial bound \eqref{PolyRemainderBound} 
\begin{align}
||f(x)-A^*z(x)||_2^2=&\sum_{i=1}^{n_y}(R_k(\omega)[f_i(x)])^2\notag\\
\leq&\sum_{i=1}^{n_y}R^{\text{abs}}_k(\omega)[f_i(x)]\notag\\
\leq&\sum_{i=1}^{n_y}R^{\text{poly}}_k(\omega)[f_i(x)]\label{PolySecBound}
\end{align}
with $A^*=\begin{bmatrix}a_1^*\hspace{0.2cm} \cdots\hspace{0.2cm}  a_{n_y}^*\end{bmatrix}^T$. As a result, equation \eqref{PolySecBound} describes a polynomial sector bound with the TPs $A^*z(x)$ as center and containing the nonlinear function $f$. This polynomial sector can also been written as the envelope 
\begin{equation}\label{Envelop1}
\left\{(x,y)\in\mathbb{X}\times\mathbb{R}^{n_y}:p_\text{sec}(x,y,A^*)\leq0 \right\}
\end{equation} 
with auxiliary polynomial
\begin{align*}
p_\text{sec}(x,y,A^*)&=||y-A^*z(x)||_2^2-\sum_{i=1}^{n_y}R^{\text{poly}}_k(\omega)[f_i(x)]\\	
&=\star^T
\Phi(\omega)\cdot\begin{bmatrix}\begin{array}{c}\begin{matrix}
I_{n_y} & -I_{n_y} & 0 \\\hline 0 & 0 & 1\end{matrix}\end{array}\end{bmatrix}
\begin{bmatrix}y\\ A^*z(x)\\1	\end{bmatrix},\\
\Phi(\omega)&=\text{diag}\left(I_{n_y}\Big|-\sum\limits_{i=1}^{n_y}R^{\text{poly}}_k(\omega)[f_i(x)]\right).
\end{align*}
By equation \eqref{PolySecBound}, all points $(x,f(x))\in\mathbb{X}\times\mathbb{R}^{n_y}$ satisfy $p_\text{sec}(x,f(x),A^*)\leq0$, and thus envelope \eqref{Envelop1} includes the graph of $f$. Since \eqref{Envelop1} together with the polynomial description of $\mathbb{X}$ achieves a purely polynomial characterized envelope of $G(f)$, equation \eqref{Envelop1} is our first ingredient for a data-based polynomial envelope of $G(f)$. However, the coefficients $A^*$ of the TP at $\omega$ are unknown because $f$ and its derivatives are not available. To this end, we establish a data-based set-membership for TPs. 

\begin{defn}[Set-membership of $A^*$]\label{DefFSS}
	The set of all coefficient matrices $A\in\mathbb{R}^{n_y\times n_z}$ admissible with the measured data \eqref{DataSetFunc} for pointwise bounded noise is given by $\Sigma_{\omega}= \{A:\exists  \tilde{d}_i\in\mathbb{R}^{n_y},i=1,\dots,S, \text{\ satisfying\ } ||\tilde{d}_i||_2\leq\epsilon_i(\tilde{y}_i,\tilde{x}_i)  \text{\ and\ } \tilde{y}_i=Az(\tilde{x}_i)+R_k(\omega)[f(\tilde{x}_i)]+\tilde{d}_i\}$ with $R_k(\omega)[f(x)]=\begin{bmatrix}R_k(\omega)[f_1(x)] & \cdots & R_k(\omega)[f_{n_y}(x)] \end{bmatrix}^T.$
\end{defn}

$\Sigma_{\omega}$ is a set-membership of $A^*$, i.e., $A^*\in\Sigma_{\omega}$, because the samples \eqref{DataSetFunc} satisfy $\tilde{y}_i=A^*z(\tilde{x}_i)+R_k(\omega)[f(\tilde{x}_i)]+\tilde{d}_i$ with $||\tilde{d}_i||_2\leq\epsilon_i(\tilde{y}_i,\tilde{x}_i)$. 
Since the remainder can not intrinsically be evaluated, the following lemma constitutes a superset of $\Sigma_{\omega}$ based on the approximation error bounds for TPs from Lemma~\ref{LemBound}. 

\begin{lem}[Superset of $\Sigma_{\omega}$]\label{LemFSS1}
	The set of coefficient matrices
	\begin{align}\label{DefSetbar}
	\bar{\Sigma}_{\omega}= \{A:
	||\tilde{y}_{i}-Az(\tilde{x}_i)||_2^2\leq q(\tilde{y}_i,\tilde{x}_i) ,i=1,\dots,S\}
	\end{align}
	with $q(\tilde{y}_i,\tilde{x}_i)=\sum_{j=1}^{n_y}R^{\text{abs}}_k(\omega)[f_j(\tilde{x}_i)]+\epsilon_i^2(\tilde{y}_i,\tilde{x}_i)+2\epsilon_i(\tilde{y}_{i},\tilde{x}_{i})\sqrt{ \sum_{j=1}^{n_y}R^{\text{abs}}_k(\omega)[f_j(\tilde{x}_i)]}$ 
	includes $\Sigma_{\omega}$, i.e., $\Sigma_{\omega}\subseteq\bar{\Sigma}_{\omega}$.
\end{lem}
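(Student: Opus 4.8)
The plan is to prove the inclusion directly: take an arbitrary $A\in\Sigma_{\omega}$ and verify that it satisfies the defining inequalities of $\bar{\Sigma}_{\omega}$ in \eqref{DefSetbar}. By the definition of $\Sigma_{\omega}$, for such an $A$ there exist disturbances $\tilde{d}_i$ with $\|\tilde{d}_i\|_2\leq\epsilon_i(\tilde{y}_i,\tilde{x}_i)$ and $\tilde{y}_i=Az(\tilde{x}_i)+R_k(\omega)[f(\tilde{x}_i)]+\tilde{d}_i$ for every $i=1,\dots,S$. Rearranging gives $\tilde{y}_i-Az(\tilde{x}_i)=R_k(\omega)[f(\tilde{x}_i)]+\tilde{d}_i$, so the quantity on the left-hand side of \eqref{DefSetbar} is the squared norm of the sum of the (unknown) remainder vector and the (bounded) disturbance.

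First I would apply the triangle inequality and expand the square to obtain $\|\tilde{y}_i-Az(\tilde{x}_i)\|_2^2\leq\|R_k(\omega)[f(\tilde{x}_i)]\|_2^2+2\|R_k(\omega)[f(\tilde{x}_i)]\|_2\,\|\tilde{d}_i\|_2+\|\tilde{d}_i\|_2^2$. These three terms correspond exactly to the three summands of $q(\tilde{y}_i,\tilde{x}_i)$, so it remains to bound each of them by the respective term.

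Next I would invoke Lemma~\ref{LemBound} componentwise. Writing $\|R_k(\omega)[f(\tilde{x}_i)]\|_2^2=\sum_{j=1}^{n_y}(R_k(\omega)[f_j(\tilde{x}_i)])^2$ and using the bound $(R_k(\omega)[f_j(x)])^2\leq R^{\text{abs}}_k(\omega)[f_j(x)]$ estimates the quadratic term by $\sum_{j=1}^{n_y}R^{\text{abs}}_k(\omega)[f_j(\tilde{x}_i)]$; since the square root is monotone, the same estimate bounds $\|R_k(\omega)[f(\tilde{x}_i)]\|_2$ by $\sqrt{\sum_{j=1}^{n_y}R^{\text{abs}}_k(\omega)[f_j(\tilde{x}_i)]}$. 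Combining this with the noise bound $\|\tilde{d}_i\|_2\leq\epsilon_i(\tilde{y}_i,\tilde{x}_i)$ in both the cross term and the last term reproduces precisely $q(\tilde{y}_i,\tilde{x}_i)$, which establishes $A\in\bar{\Sigma}_{\omega}$ and hence the inclusion $\Sigma_{\omega}\subseteq\bar{\Sigma}_{\omega}$.

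Since the argument is a chain of elementary norm inequalities, I do not anticipate a genuine obstacle; the only point requiring care is the handling of the remainder in its two forms, once as a squared norm and once linearly in the cross term. One must apply the sum-of-squares bound from Lemma~\ref{LemBound} \emph{before} taking the square root in the cross term, so that monotonicity of $\sqrt{\cdot}$ yields the factor $\sqrt{\sum_{j}R^{\text{abs}}_k(\omega)[f_j(\tilde{x}_i)]}$ rather than a sum of square roots, matching the definition of $q$ exactly.
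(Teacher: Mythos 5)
Your proposal is correct and follows essentially the same route as the paper's proof: both rearrange the defining relation of $\Sigma_{\omega}$ to $\tilde{y}_i-Az(\tilde{x}_i)=R_k(\omega)[f(\tilde{x}_i)]+\tilde{d}_i$, split the squared norm into the three terms $\|R_k\|_2^2+2|\tilde{d}_i^T R_k|+\|\tilde{d}_i\|_2^2$ (your triangle-inequality expansion is equivalent to the paper's exact expansion followed by Cauchy--Schwarz on the cross term), and bound them via \eqref{AbsRemainderBound} from Lemma~\ref{LemBound} and Assumption~\ref{AssNoiseBound}. Your closing remark about applying the componentwise bound before taking the square root in the cross term is exactly the care the paper's argument implicitly takes, so no gap remains.
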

\begin{proof}
	From the definition of $\Sigma_{\omega}$, any $A\in\Sigma_{\omega}$ satisfies
	\begin{align*}
	&\ ||\tilde{y}_{i}-Az(\tilde{x}_i)||_2^2=||R_k(\omega)[f(\tilde{x}_i)]+\tilde{d}_{i}||_2^2\\
	=&\ ||R_k(\omega)[f(\tilde{x}_i)]||_2^2+||\tilde{d}_i||_2^2+2\tilde{d}_i^TR_k(\omega)[f(\tilde{x}_i)]\\
	\leq&\  q(\tilde{y}_i,\tilde{x}_i).
	\end{align*}
	The inequality holds by applying \eqref{AbsRemainderBound} from Lemma~\ref{LemBound}, the noise bound from Assumption~\ref{AssNoiseBound},  and the Cauchy-Schwarz inequality. For that reason, any $A\in\Sigma_{\omega}$ is also included in $\bar{\Sigma}_{\omega}$. This proves the claim.	
\end{proof}

Since $\Sigma_{\omega}$ is a set-membership for $A^*$ and $\Sigma_{\omega}\subseteq\bar{\Sigma}_{\omega}$, $\bar{\Sigma}_{\omega}$ provides a data-based set-membership for $A^*$. Even though we consider in Lemma~\ref{LemFSS1} the non-polynomial but tighter error bound \eqref{AbsRemainderBound} instead of \eqref{PolyRemainderBound}, $\bar{\Sigma}_{\omega}$ is characterized by quadratic constraints on the polynomial coefficients. Indeed, we can reformulate, as in \cite{MartinDissi},
\begin{equation*}
Az(x)=(I_{n_y}\otimes z(x)^T )\text{vec}(A^T)=K(x)\text{vec}(A^T), 
\end{equation*}   
where $\otimes$ denotes the Kronecker product and $\text{vec}$ the vectorization of a matrix by stacking its columns. Then $\bar{\Sigma}_{\omega}$ can be written by quadratic constraints on the elements of $A$ 
\begin{align*}
&\bar{\Sigma}_{\omega}= \{A: \text{for\ } i=1,\dots,S,\\  
&\hspace{0.3cm}\star^T\begin{bmatrix}
\tilde{y}_{i}^T\tilde{y}_{i}-q(\tilde{y}_i,\tilde{x}_i) & -\tilde{y}_{i}^T K(\tilde{x}_i)\\ -K(\tilde{x}_i)^T\tilde{y}_{i} & K(\tilde{x}_i)^TK(\tilde{x}_i)	
\end{bmatrix}\cdot\begin{bmatrix}1\\\text{vec}(A^T)\end{bmatrix}\leq0\}.
\end{align*}
Despite the fact that $\bar{\Sigma}_{\omega}$ would be suitable to determine system properties \cite{MartinDissi}, the vectorization of $A^T$ would lead to computationally demanding SOS optimization problems. Indeed, the size increases with $n_yn_z$, the number of unknown parameters. Since also the number of quadratic constraints increases with the number of samples, we follow the approach from \cite{MartinNLMSOS}. There we combine ellipsoidal outer approximations \cite{Boyd_Elli} and the dualization lemma~\cite{SchererLMI} (Chapter 4.4.1) to construct a more computationally appealing characterization for data-driven set-memberships in case of pointwise bounded noise. 

\begin{lrop}[Superset of $\bar{\Sigma}_{\omega}$]\label{Primal_elli}
	
	If the matrix $\tilde{Z}=\begin{bmatrix}z(\tilde{x}_1) & \cdots & z(\tilde{x}_S)\end{bmatrix}$ has full row rank, then there exist matrices $\Xi_{1\text{p}}\succ0$ and $\Xi_{2\text{p}}$ and scalars $\eta_1,\dots,\eta_S\geq0$ solving the LMI
	\begin{align}\label{Elli_LMI}
	\begin{bmatrix}\begin{array}{c|c}
	\begin{matrix}\Xi_{1\text{p}} & \Xi_{2\text{p}}\\\Xi_{2\text{p}}^T & -I_{n_y}	\end{matrix}& \begin{matrix}0\\\Xi_{2\text{p}}^T \end{matrix} \\\hline 
	\begin{matrix}0 &\Xi_{2\text{p}}\end{matrix} & -\Xi_{1\text{p}} \end{array}	\end{bmatrix}-\sum_{i=1}^{S}\eta_i\begin{bmatrix}\begin{array}{c|c}\Xi_i &  0\\\hline   0 & 0\end{array}\end{bmatrix}
	\prec0,
	\end{align}
	for the data-based matrices
	\begin{equation*}
	\Xi_i=\begin{bmatrix}z(\tilde{x}_i)z(\tilde{x}_i)^T & -z(\tilde{x}_i)\tilde{y}_i^T\\ -\tilde{y}_iz(\tilde{x}_i)^T & \tilde{y}_i\tilde{y}_i^T-q(\tilde{y}_i,\tilde{x}_i)I_{n_y}\end{bmatrix}.
	\end{equation*}
	Moreover, $\bar{\Sigma}_{\omega}$ is a subset of 
	\begin{align}\label{Sigma_A}
	\tilde{\Sigma}_{\omega}=\left\{A:\begin{bmatrix}I_{n_z}\\A\end{bmatrix}^T\varDelta_*\begin{bmatrix}I_{n_z}\\A\end{bmatrix}\preceq0\right\}
	\end{align}
	with $\varDelta_*=\begin{bmatrix}-\varDelta_{1\text{p}} & \varDelta_{2\text{p}}\\ \varDelta_{2\text{p}}^T & -\varDelta_{3\text{p}}\end{bmatrix}$, $\varDelta_{1\text{p}}\in\mathbb{R}^{n_z\times n_z},\varDelta_{2\text{p}}\in\mathbb{R}^{n_z\times n_y},\varDelta_{3\text{p}}\in\mathbb{R}^{n_y\times n_y}$, $\begin{bmatrix}\varDelta_{1\text{p}} & \varDelta_{2\text{p}}\\ \varDelta_{2\text{p}}^T & \varDelta_{3\text{p}}\end{bmatrix}=\Xi_\text{p}^{-1}$, and $\Xi_\text{p}=\begin{bmatrix}\Xi_{1\text{p}} & \Xi_{2\text{p}}\\ \Xi_{2\text{p}}^T & \Xi_{2\text{p}}^T\Xi_{1\text{p}}^{-1}\Xi_{2\text{p}}-I_{n_y}\end{bmatrix}$.
\end{lrop}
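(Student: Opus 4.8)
The plan is to realize the announced combination of an S-procedure relaxation with the dualization lemma. First I would recast the constraints defining $\bar{\Sigma}_{\omega}$ in Lemma~\ref{LemFSS1} as quadratic matrix inequalities (QMIs). Writing $q_i=q(\tilde{y}_i,\tilde{x}_i)$ and using that a rank-one matrix satisfies $vv^T\preceq q_iI_{n_y}$ if and only if $v^Tv\leq q_i$, applied to $v=Az(\tilde{x}_i)-\tilde{y}_i$, one gets
\[
	||\tilde{y}_i-Az(\tilde{x}_i)||_2^2\leq q_i \ \Leftrightarrow\ \begin{bmatrix}A & I_{n_y}\end{bmatrix}\Delta_i\begin{bmatrix}A^T\\ I_{n_y}\end{bmatrix}\preceq0,
\]
because the latter expression equals $(Az(\tilde{x}_i)-\tilde{y}_i)(Az(\tilde{x}_i)-\tilde{y}_i)^T-q_iI_{n_y}$. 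Hence $\bar{\Sigma}_{\omega}$ is exactly the intersection of the $S$ ellipsoidal QMIs in the orientation $\begin{bmatrix}A & I_{n_y}\end{bmatrix}$, and the task reduces to bounding this intersection by a single ellipsoid.

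Assuming a feasible LMI solution, I would first establish the containment. Taking the Schur complement of \eqref{Elli_LMI} with respect to its nonsingular lower-right block $-\Delta_{1\text{p}}$ (which needs $\Delta_{1\text{p}}\succ0$) adds $\Delta_{2\text{p}}^T\Delta_{1\text{p}}^{-1}\Delta_{2\text{p}}$ to the central block, turning $-I_{n_y}$ into $\Delta_{3\text{p}}$ and collapsing the $3\times3$ block inequality to $\Delta_{\text{p}}-\sum_{i=1}^{S}\eta_i\Delta_i\preceq0$. Then for every $A\in\bar{\Sigma}_{\omega}$ and $\eta_i\geq0$, sandwiching this inequality by $\begin{bmatrix}A & I_{n_y}\end{bmatrix}$ and invoking $\begin{bmatrix}A & I_{n_y}\end{bmatrix}\Delta_i\begin{bmatrix}A^T\\ I_{n_y}\end{bmatrix}\preceq0$ yields $\begin{bmatrix}A & I_{n_y}\end{bmatrix}\Delta_{\text{p}}\begin{bmatrix}A^T\\ I_{n_y}\end{bmatrix}\preceq0$, i.e. $A$ lies in the ellipsoid $\mathcal{E}=\{A:\begin{bmatrix}A & I_{n_y}\end{bmatrix}\Delta_{\text{p}}\begin{bmatrix}A^T\\ I_{n_y}\end{bmatrix}\preceq0\}$. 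To identify $\mathcal{E}$ with $\tilde{\Sigma}_{\omega}$, note that the Schur complement of $\Delta_{\text{p}}$ with respect to $\Delta_{1\text{p}}\succ0$ equals $\Delta_{3\text{p}}-\Delta_{2\text{p}}^T\Delta_{1\text{p}}^{-1}\Delta_{2\text{p}}=-I_{n_y}$, so $\Delta_{\text{p}}$ is nonsingular with $n_z$ positive and $n_y$ negative eigenvalues. The dualization lemma~\cite{SchererLMI} then converts the primal QMI in $\Delta_{\text{p}}$ and orientation $\begin{bmatrix}A & I_{n_y}\end{bmatrix}$ into the equivalent dual QMI in $\Delta_{\text{p}}^{-1}$ and orientation $\begin{bmatrix}I_{n_z}\\ A\end{bmatrix}$, whose sign-adjusted middle matrix is precisely $\varDelta_*$; thus $\mathcal{E}=\tilde{\Sigma}_{\omega}$ and $\bar{\Sigma}_{\omega}\subseteq\tilde{\Sigma}_{\omega}$.

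It remains to prove feasibility, which is where the full-row-rank hypothesis on $\tilde{Z}$ is used. I would construct an explicit solution by choosing a single scalar $\eta_i=\eta>0$ and setting $\Delta_{1\text{p}}$ and $\Delta_{2\text{p}}$ equal to $\eta$ times the $(1,1)$- and $(1,2)$-blocks of $\sum_{i=1}^{S}\Delta_i$, that is $\Delta_{1\text{p}}=\eta\,\tilde{Z}\tilde{Z}^T$ and $\Delta_{2\text{p}}=-\eta\sum_{i=1}^{S}z(\tilde{x}_i)\tilde{y}_i^T$. Full row rank gives $\tilde{Z}\tilde{Z}^T\succ0$, hence $\Delta_{1\text{p}}\succ0$. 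With this choice the $(1,1)$- and $(1,2)$-blocks of $\Delta_{\text{p}}-\eta\sum_{i=1}^{S}\Delta_i$ vanish and the inequality reduces to its central block $-\eta S-I_{n_y}\preceq0$, where $S$ is the Schur complement of the $(1,1)$-block of $\sum_{i=1}^{S}\Delta_i$. Since $S$ is a fixed data matrix, $\eta S+I_{n_y}\succeq0$ holds for all sufficiently small $\eta>0$, establishing feasibility.

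The hard part will be the dualization step: one must verify that $\Delta_{\text{p}}$ has exactly the inertia the lemma requires and that the dual QMI reproduces $\varDelta_*$ with the signs stated in the claim, whereas the S-procedure sandwiching and the Schur-complement rewritings are routine once the QMI reformulation is available. A minor but important point is that the S-procedure is only sufficient, so this route delivers the one-sided inclusion $\bar{\Sigma}_{\omega}\subseteq\tilde{\Sigma}_{\omega}$ asserted in the proposition rather than an equality.
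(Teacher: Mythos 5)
Your proposal is correct and follows the same overall skeleton as the paper's proof --- the same explicit feasibility construction $\Delta_{1\text{p}}=\eta\tilde{Z}\tilde{Z}^T$, $\Delta_{2\text{p}}=-\eta\tilde{Z}\tilde{Y}^T$ with $\eta>0$ small (the paper adapts this from Lemma~2 of \cite{Ellipsoid_DePersis2}), the same reading of \eqref{Elli_LMI} as an ellipsoidal outer approximation of the intersection of the matrix ellipsoids $\Delta_i$, and the same dualization-lemma endgame --- but you justify three sub-steps differently, each time a bit more self-containedly. First, you obtain the QMI description of $\bar{\Sigma}_{\omega}$ from the elementary rank-one equivalence $vv^T\preceq q_iI_{n_y}\Leftrightarrow \|v\|_2^2\leq q_i$, whereas the paper gets the identical description \eqref{Sigma_dual} by a \emph{first} application of the dualization lemma to the scalar constraint (inverting the $(1+n_y)\times(1+n_y)$ data matrix); your route uses the lemma only once and avoids checking $q_i>0$-dependent invertibility at that stage. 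Second, you make the containment step explicit (Schur complement of \eqref{Elli_LMI} with respect to the $(3,3)$ block $-\Delta_{1\text{p}}$ collapses it to $\Delta_\text{p}-\sum_i\eta_i\Delta_i\preceq0$, then sandwich with $\begin{bmatrix}A & I_{n_y}\end{bmatrix}$), where the paper simply cites \cite{Boyd_Elli}; your central-block algebra is sound --- the reduced feasibility condition is indeed $-I_{n_y}-\eta S\preceq0$ with $S$ the Schur complement of $\sum_i\Delta_i$, which holds for small $\eta>0$ (the paper's displayed intermediate matrix carries a sign slip here, but reaches the same conclusion). Third, your observation that the Schur complement of $\Delta_\text{p}$ with respect to $\Delta_{1\text{p}}\succ0$ equals $-I_{n_y}$ delivers nonsingularity \emph{and} the inertia ($n_z$ positive, $n_y$ negative eigenvalues) needed by the dualization lemma in one stroke, subsuming the paper's separate kernel-based invertibility argument, which establishes nonsingularity but leaves the inertia hypothesis implicit. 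Two cosmetic points: you write ``$\Delta_{3\text{p}}$'' for the $(2,2)$ block $\Delta_{2\text{p}}^T\Delta_{1\text{p}}^{-1}\Delta_{2\text{p}}-I_{n_y}$ of $\Delta_\text{p}$, while the proposition reserves $\varDelta_{3\text{p}}$ for a block of $\Delta_\text{p}^{-1}$; and, like the paper, you apply the dualization lemma (stated in \cite{SchererLMI} for strict inequalities) to non-strict QMIs without comment --- not a gap relative to the paper, but worth a word if written out in full.
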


\begin{proof}
	We first prove that LMI \eqref{Elli_LMI} has a solution if $\tilde{Z}$ has full row rank by adapting the proof of Lemma 2 in \cite{Ellipsoid_DePersis2}. To this end, let $\eta>0$ be a to-be-optimized scalar and set $\eta_i=\eta$ for all $i=1,\dots,S$, $\Xi_{1\text{p}}=\frac{\eta}{2} \tilde{Z}\tilde{Z}^T$, and $\Xi_{2\text{p}}=-\eta \tilde{Z}\tilde{Y}^T$ with $\tilde{Y}=\begin{bmatrix}\tilde{y}_1 & \cdots & \tilde{y}_S\end{bmatrix}$. For $\tilde{Z}\tilde{Z}^T=\sum_{i=1}^{S} z(\tilde{x}_i)z(\tilde{x}_i)^T$ and $\tilde{Z}\tilde{Y}=\sum_{i=1}^{S} z(\tilde{x}_i)\tilde{y}_i^T$ together with the choice of $\Xi_{1\text{p}}$ and $\Xi_{2\text{p}}$, LMI $\eqref{Elli_LMI}$ yields  
	\begin{align*}
	\begin{bmatrix}-\frac{\eta}{2} \tilde{Z}\tilde{Z}^T &0&0\\0&-I_{n_y}+\eta\sum_{i=1}^{S}(\tilde{y}_i\tilde{y}_i^T-qI_{n_y}) & \Xi_{2\text{p}}^T\\ 0 &\Xi_{2\text{p}} & -\frac{\eta}{2} \tilde{Z}\tilde{Z}^T \end{bmatrix}\prec0.
	\end{align*}
	Since $\tilde{Z}$ has full row rank, $\tilde{Z}\tilde{Z}^T\succ0$, and hence the LMI is satisfied if $I_{n_y}-\eta\sum_{i=1}^{S}(\tilde{y}_i\tilde{y}_i^T-q(\tilde{y}_i,\tilde{x}_i)I_{n_y})-2\eta(\tilde{Z}\tilde{Y}^T)^T(\tilde{Z}\tilde{Z}^T)^{-1}\tilde{Z}\tilde{Y}^T\succ0$ by the Schur complement. This is always feasible by choosing $\eta>0$ sufficiently small.\\\indent
	Next we show the second statement $\bar{\Sigma}_{\omega}\subseteq\tilde{\Sigma}_{\omega}$. By its definition \eqref{DefSetbar}, $\bar{\Sigma}_{\omega}$ can also be characterized by the quadratic constraints	
	\begin{align*}
	\star^T
	\begin{bmatrix}
	\tilde{y}_i^T\tilde{y}_i-q(\tilde{y}_i,\tilde{x}_i) & -\tilde{y}_i^T\\ -\tilde{y}_i & I_{n_y}\end{bmatrix}\cdot	\begin{bmatrix}1\\ Az(\tilde{x}_i)	\end{bmatrix}\leq0,\ i=1,\dots,S.
	\end{align*}
	Since the inverse of the inner matrix is given by $\frac{-1}{q(\tilde{y}_i,\tilde{x}_i)}\begin{bmatrix}1 & \tilde{y}_i^T\\ \tilde{y}_i & \tilde{y}_i\tilde{y}_i^T-q(\tilde{y}_i,\tilde{x}_i)I_{n_y}\end{bmatrix}$
	and $q(\tilde{y}_i,\tilde{x}_i)>0$ by Assumption~\ref{AssNoiseBound}, the dualization lemma~\cite{SchererLMI} (Chapter 4.4.1) yields the equivalent description
	\begin{align}\label{Sigma_dual}
	\bar{\Sigma}_{\omega}=\left\{A:\begin{bmatrix}A^T\\I_{n_y}\end{bmatrix}^T\Xi_i\begin{bmatrix}A^T\\I_{n_y}\end{bmatrix}\prec0,i=1,\dots,S\right\}.
	\end{align}
	Observe that \eqref{Sigma_dual} is given by the intersection of matrix ellipsoids as in \cite{MartinNLMSOS} and \cite{Ellipsoid_DePersis}. Hence, an ellipsoidal outer approximation of the intersection of all matrix ellipsoids of each sample is calculated by \eqref{Elli_LMI} following the result from \cite{Boyd_Elli}. This yields the superset $\left\{A:\begin{bmatrix}A^T\\I_{n_y}\end{bmatrix}^T\Xi_\text{p}\begin{bmatrix}A^T\\I_{n_y}\end{bmatrix}\prec0\right\}\supseteq\bar{\Sigma}_{\omega}$ which is equivalent to $\tilde{\Sigma}_{\omega}$ by the dualization lemma.\\\indent
	It remains to prove the invertibility of $\Xi_\text{p}$. To this end, suppose $\Xi_\text{p}$ has not full rank. Then there exists a vector $r=\begin{bmatrix}r_1^T& r_2^T\end{bmatrix}^T\neq0$ such that
	\begin{align*}
	\Xi_p\begin{bmatrix}r_1\\ r_2\end{bmatrix}=\begin{bmatrix}\Xi_{1\text{p}}r_1+\Xi_{2\text{p}}r_2\\ \Xi_{2\text{p}}^T(r_1+\Xi_{1\text{p}}^{-1}\Xi_{2\text{p}}r_2)-r_2\end{bmatrix}=0.
	\end{align*}
	Combining both equations yields $\Xi_{2\text{p}}^T(r_1-\Xi_{1\text{p}}^{-1}\Xi_{1\text{p}}r_1)-r_2=-r_2=0$. Together with $\Xi_{1\text{p}}\succ0$, $\Xi_{1\text{p}}r_1+\Xi_{2\text{p}}r_2=0$ implies $r_1=0$, and hence $r=0$. Due to the contradiction with $r\neq0$, $\Xi_\text{p}$ has full rank, and therefore is invertible.
\end{proof}

The LMI feasibility problem \eqref{Elli_LMI} can be extended by maximizing over $\gamma>0$ with $\Xi_{1\text{p}}\succeq\gamma I_{n_z}$ to retrieve an ellipse with minimal diameter. Moreover, equation \eqref{Sigma_dual} is a dual description as it contains $A^T$ instead of $A$. Furthermore, $\Xi_i$ are not invertible as their left upper block have rank one. Thus, the dualization lemma can not be employed directly to recover a primal characterization. Instead, we first compute in Proposition~\ref{Primal_elli} an ellipsoidal outer approximation of the intersection of all matrix ellipsoids of each sample and then apply the dualization lemma.\\\indent
We assess the assumptions in Proposition~\ref{Primal_elli} to be not restrictive as the row rank of the data-dependent matrix $\tilde{Z}$ can be imposed by increasing the number of columns with additional samples. The rank criterion can easily be checked from data. Also note that the rank condition can be explained geometrically as in the following remark. 

\begin{rmk}[Geometric explanation of rank condition of $\tilde{Z}$]\label{RmkGeo}
	Observe that $\bar{\Sigma}_{\omega}$ is characterized by the intersection of the matrix ellipsoids from \eqref{Sigma_dual} which can also be written as
	\begin{equation}\label{Ellip}
	\star^Tz(\tilde{x}_i)z(\tilde{x}_i)^T\cdot\left(A^T-\frac{z(\tilde{x}_i)\tilde{y}_i^T}{z(\tilde{x}_i)^Tz(\tilde{x}_i)}\right)\preceq q(\tilde{y}_i,\tilde{x}_i)I_{n_y}.
	\end{equation}
	Since $z(\tilde{x}_i)z(\tilde{x}_i)^T$ has rank one, equation \eqref{Ellip} is a quadratic constraint on $A^T$ that does not describe a closed matrix ellipse. Instead, \eqref{Ellip} corresponds to the unbounded set with boundaries $A_B^T=\frac{z(\tilde{x}_i)\tilde{y}_i^T}{z(\tilde{x}_i)^Tz(\tilde{x}_i)}+B\frac{\sqrt{q(\tilde{y}_i,\tilde{x}_i)}}{z(\tilde{x}_i)^Tz(\tilde{x}_i)}z(\tilde{x}_i)\frac{\ell^T}{||\ell||_2}+\sum_{i=1}^{n_z-1}\sum_{j=1}^{n_y}c_{ij}v_iw_j^T$ where $B\in\{-1,1\}$, $\ell\in\mathbb{R}^{n_y}$ and $c_{ij}\in\mathbb{R}$ are arbitrary, $v_1,\dots,v_{n_z-1}\in\mathbb{R}^{n_z}$ span an orthogonal basis to $z(\tilde{x}_i)$, i.e., $z(\tilde{x}_i)^Tv_i=0,$ and $w_1,\dots,w_{n_y}\in\mathbb{R}^{n_y}$ span a basis of $\mathbb{R}^{n_y}$. Indeed, for $A^T=A_B^T$, equation \eqref{Ellip} yields $\frac{\ell\ell^T}{||\ell||_2^2}\preceq I_{n_y}$ which is satisfied as the non-zero eigenvalue of $\frac{\ell\ell^T}{||\ell||_2^2}$ is equal to $\frac{\ell^T\ell}{||\ell||_2^2}=1$. Moreover, a perturbation $A_B^T+B\epsilon z(\tilde{x}_i)\ell^T$ for arbitrary small $\epsilon>0$ violates \eqref{Ellip}. By the boundaries $A_B^T$, if a matrix $\tilde{A}^T$ satisfies \eqref{Ellip} then $\tilde{A}^T+\sum_{i=1}^{n_z-1}\sum_{j=1}^{n_y}\tilde{c}_{ij}v_iw_j^T$ for arbitrary $\tilde{c}_{ij}\in\mathbb{R}$ as well. Together with the fact that $v_1,\dots,v_{n_z-1}\in\mathbb{R}^{n_z}$ span an orthogonal basis to $z(\tilde{x}_i)$, the intersection of the unbounded sets described by \eqref{Ellip} for $i=1,\dots,S$ is bounded if $z(\tilde{x}_1),\dots,z(\tilde{x}_S)$ span the whole $\mathbb{R}^{n_z}$. This is equivalent to $\tilde{Z}$ having full row rank. Thus, there exists an ellipsoidal outer approximation in \eqref{Elli_LMI} with $\Xi_{1p}\succ0$.
\end{rmk}

Finally, we can combine the two ingredients, the envelope \eqref{Envelop1} and the set-membership \eqref{Sigma_A} of the TP of $f$ at $\omega$, to constitute a data-based envelope of $G(f)$.
\begin{thm}[Data-based envelope of $G(f)$]\label{ThmDataEnvelope}
	Under Assumption~\ref{AssBoundDeri} and \ref{AssNoiseBound} and $\tilde{Z}=\begin{bmatrix}z(\tilde{x}_1) & \cdots & z(\tilde{x}_S)\end{bmatrix}$ having full row rank, the graph of a $k+1$ times continuously differentiable function $f:\mathbb{X}\subset\mathbb{R}^{n_x}\rightarrow\mathbb{R}^{n_y}$ is a subset of
	\begin{align}\label{EnvelopDB}
	E(f) = \bigcup_{A\in\tilde{\Sigma}_{\omega}}\left\{(x,y)\in\mathbb{X}\times\mathbb{R}^{n_y}:p_\text{sec}(x,y,A)\leq0 \right\}.
	\end{align}
\end{thm}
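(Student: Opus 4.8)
The plan is to prove the inclusion $G(f)\subseteq E(f)$ by a short chaining argument that combines the two ingredients already assembled in this section: the polynomial sector bound \eqref{Envelop1} and the data-based set-membership \eqref{Sigma_A}. The key observation is that both objects are built around the \emph{same} expansion point $\omega\in\mathbb{X}$ and the \emph{same} true coefficient matrix $A^*$, so I would fix $\omega$ once at the outset and keep it fixed throughout.

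First I would recall that, by Lemma~\ref{LemBound} and the resulting sector bound \eqref{PolySecBound}, the true Taylor-polynomial coefficients $A^*$ at $\omega$ satisfy $||f(x)-A^*z(x)||_2^2\leq\sum_{i=1}^{n_y}R^{\text{poly}}_k(\omega)[f_i(x)]$ for every $x\in\mathbb{X}$. Unfolding the definition of $p_\text{sec}$, this inequality is exactly $p_\text{sec}(x,f(x),A^*)\leq0$, which is the statement \eqref{Envelop1} that the $A^*$-slice of the envelope contains $G(f)$.

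Second, I would show $A^*\in\tilde{\Sigma}_{\omega}$ by stringing together the three inclusions established above. Since the noisy samples satisfy $\tilde{y}_i=A^*z(\tilde{x}_i)+R_k(\omega)[f(\tilde{x}_i)]+\tilde{d}_i$ with $||\tilde{d}_i||_2\leq\epsilon_i(\tilde{y}_i,\tilde{x}_i)$, Definition~\ref{DefFSS} gives $A^*\in\Sigma_{\omega}$; Lemma~\ref{LemFSS1} gives $\Sigma_{\omega}\subseteq\bar{\Sigma}_{\omega}$; and Proposition~\ref{Primal_elli}, which applies precisely because $\tilde{Z}$ is assumed full row rank, gives $\bar{\Sigma}_{\omega}\subseteq\tilde{\Sigma}_{\omega}$. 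Chaining these yields $A^*\in\tilde{\Sigma}_{\omega}$.

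Finally, combining the two steps closes the argument: the $A^*$-slice $\{(x,y):p_\text{sec}(x,y,A^*)\leq0\}$ contains $G(f)$ by the first step, and it is one of the sets appearing in the union \eqref{EnvelopDB} defining $E(f)$ by the second step, so $G(f)\subseteq E(f)$. I do not expect a genuine technical obstacle here, since all the heavy lifting is done by the preceding results; the only point requiring care is the quantifier bookkeeping, namely that the same $\omega$ (hence the same $A^*$) is used both in the sector bound and in the set-membership, and that it suffices to exhibit a single admissible $A$ --- here $A=A^*$ --- to place $G(f)$ inside the union, rather than needing the inclusion to hold slice-by-slice.
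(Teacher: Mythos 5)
Your proposal is correct and follows essentially the same route as the paper's own (much terser) proof: exhibit the single slice $A=A^*$, noting that the sector bound \eqref{Envelop1} places $G(f)$ in that slice and that $A^*\in\Sigma_{\omega}\subseteq\bar{\Sigma}_{\omega}\subseteq\tilde{\Sigma}_{\omega}$ by Definition~\ref{DefFSS}, Lemma~\ref{LemFSS1}, and Proposition~\ref{Primal_elli} under the full-row-rank condition on $\tilde{Z}$. Your version merely makes explicit the chain of inclusions and the quantifier bookkeeping that the paper leaves implicit.
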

\begin{proof}
	Since $\tilde{\Sigma}_{\omega}$ contains the coefficient matrix $A^*$ of the TP of $f$ at $\omega$ and the envelope \eqref{Envelop1} includes $G(f)$ for $A^*$, the statement $G(f)\subseteq E(f)$ holds true.	
\end{proof}

We emphasize that the envelope $E(f)$ is a data-based polynomial representation of the nonlinear function $f$ that does not require the knowledge of a function basis containing $f$. Indeed, $p_\text{sec}$ and $w_1,\dots,w_{n_w}$ in \eqref{Domain} are polynomials and $\tilde{\Sigma}_{\omega}$ is deduced from data~\eqref{DataSetFunc} by means of Proposition~\ref{Primal_elli}. Moreover, $E(f)$ includes for all $M_{i,\alpha}=0$ the data-driven models from \cite{AnneDissi,WaardeDissi,MartinDissi} where a data-based set-membership for linear and polynomial systems, respectively, are introduced. Contrary to a linear or polynomial dynamics which depend linearly on unknown parameters, we consider here dynamics with unknown structure. Thereby, the presented method provides a new viewpoint to deal with more complex unknown dynamics: we establish a set-membership for an unknown approximation of the dynamics by means of a characterization of its approximation error.\\\indent
Classical polynomial interpolation of nonlinear functions supposes that the nonlinear function and its derivatives can be exactly evaluated at specified points \cite{Taylor} and \cite{BernsteinPoly}. Contrary, here we only have access to noisy samples of $f$ at unspecified points. Due to this additional flexibility, $E(f)$ is suitable for the data-based analysis of dynamical systems. There we typically have only access to noisy measurements of trajectories, and hence we can not specify exactly where the system dynamics is evaluated.\\\indent
The envelope $E(f)$ can also be defined with $\Sigma_{\omega}$ or $\bar{\Sigma}_{\omega}$. However, the former requires the evaluation of the remainder $R_k(\omega)[f(x)]$ and the latter renders $E(f)$ complex for a large amount of samples.\\\indent
In general, the conservatism of $E(f)$ is obscure as the size of $\tilde{\Sigma}_\omega$ is even for LTI systems unclear. To this end, \cite{MartinNL} depicts $E(f)$ for two numerical examples. There we observe that the data-driven inferences on system properties using tight bounds for Assumption~\ref{AssBoundDeri} and \ref{AssNoiseBound} are almost equal to the inference from the true TP. Therefore, $\tilde{\Sigma}_\omega$ is expected to be small under these conditions. Moreover, the conservatism of $\tilde{\Sigma}_\omega$ is reduced if the available samples are close to $\omega$. Indeed, the distance between the boundaries $A^T_{1}$ and $A^T_{-1}$ from Remark~\ref{RmkGeo} can be measured by the Frobenius norm 
$\Big|\Big|2\frac{\sqrt{q(\tilde{y}_i,\tilde{x}_i)}}{z(\tilde{x}_i)^Tz(\tilde{x}_i)}z(\tilde{x}_i)\frac{\ell^T}{||\ell||_2}\Big|\Big|_\text{Fr}^2=\text{trace}\left(4\frac{{q(\tilde{y}_i,\tilde{x}_i)}}{z(\tilde{x}_i)^Tz(\tilde{x}_i)}\frac{\ell\ell^T}{||\ell||_2^2}\right)=4\frac{{q(\tilde{y}_i,\tilde{x}_i)}}{z(\tilde{x}_i)^Tz(\tilde{x}_i)}$. Since $q$ and $z^Tz$ contain $||x-\omega||_2$ of order at most $k+1$ and $k$, respectively, the distance increases with $||x-\omega||_2$. Thereby, 
data close to $\omega$ provide tighter boundaries, and hence generally more information on the set-membership $\tilde{\Sigma}_\omega$. For that reason, choosing $\omega$ where more samples are available, can yield a less conservative envelope.\\\indent
Whereas the conservatism of $E(f)$ by $\tilde{\Sigma}_\omega$ depends strongly on the given data, the conservatism by the polynomial sector bound by TP approximation is known by \eqref{PolySecBound}. Therefore, the smallest possible $E(f)$ corresponds to \eqref{PolySecBound}. Moreover, if the smoothness of $f$ allows to consider the order of the TP approximation $k$ as degree of freedom, then considering the polynomial sector \eqref{PolySecBound} for $\mathbb{X}$ can provide insights on how to chose $k$. In fact, 
while the sector for larger $k$ is smaller close to $\omega$, the sector is wider for larger values of $||x-\omega||_2$. Thus, a larger $k$ does not necessarily result in a more accurate inference on a system property. Besides the size of $\mathbb{X}$, the choice of $k$ also depends on the available data because noisy data typically provide only conservative inferences on coefficients of higher order monomials. Concluding, we recommend to stepwise increase $k$ for data-driven inferences on system properties as it is not a priori clear whether larger $k$ yields less conservative inferences. Furthermore, if the maximal allowed $k$ is not known, we refer to Remark~\ref{RmkChoiceofk} for the derivation of a meaningful selection of $k$.

\subsection{Improving the accuracy of envelope E(f)}\label{Sec_Improv}

Based on the previous discussion, the accuracy of $E(f)$ can be significantly improved by combining multiple envelopes with TPs around distinct points $\omega_{1},\dots,\omega_{n_\omega}$
\begin{align}\label{InterTSE}
E'(f)= \bigcap_{i=1}^{n_\omega}\bigcup_{A\in\tilde{\Sigma}_{\omega_i}}\left\{(x,y)\in\mathbb{X}\times\mathbb{R}^{n_y}:p_\text{sec}(x,y,A)\leq0 \right\}.
\end{align}
Hence, the idea of $E'(f)$ is to represent a nonlinear function by multiple polynomial approximations around $\omega_{i}$ rather than finding one polynomial approximation. A piecewise polynomial representation is similar to a spline approximation which exhibits a better approximation than a single polynomial \cite{Poly_Spline}. Hence, a significant improvement of estimating system properties by $E'(f)$ compared to $E(f)$ is to be expected at the cost of higher computational complexity. Whereas a uniform grid over $\mathbb{X}$ for the selection of $\omega_i$ leads to a minimal approximation error by polynomial sector \eqref{PolySecBound}, an iteratively obtained selection of $\omega_i$ for optimized inferences on system properties will be presented in Remark~\ref{RmkOmega}.  \\\indent
$E'(f)$ generalizes the set-membership approach in \cite{Milanese} and Kinky inference \cite{KinkyInf} where an envelope for nonlinear functions is derived from Lipschitz approximations. Indeed, the polynomial sector bound \eqref{PolySecBound} boils down for $k=0$ to the usual Lipschitz continuity condition as $A^*z(x)=f(\omega)$. Moreover, Assumption~\ref{AssBoundDeri} implies a Lipschitz constant of $f$. \\\indent 
A purely data-driven approach for nonlinear functions might be excessively conservative. To additionally reduce the conservatism and complexity of the envelopes $E(f)$ and $E'(f)$, we can leverage prior knowledge on the structure of $f$. For instance, if $f_1$ is only a polynomial with degree $d_1$, then the TP can be developed until order $d_1$ and $M_{1,\alpha}$ for $|\alpha|=d_1+1$ can be set to zero. Moreover, to show stability of a known equilibrium of a dynamical system, we can enforce $f(0)=0$ by omitting the $0$-th order of the TP at $\omega=0$. In the sequel, we present how the separate consideration of data-driven envelopes for each $f_i$ allows to exploit the knowledge that the TPs of $f_i$ might be given by $a_i^*z_i(x),i=1,\dots,n_x$. This can reduce the number of unknown coefficients compared to Section~\ref{SecModel} with $z_1=\dots=z_{n_x}=z$ because $z_i$ might not depend on all $x_1,\dots,x_{n_x}$.\\\indent
For the sake of illustration, let a nonlinear function $f$ be given by 
\begin{equation*}
f(x)=\begin{bmatrix}f_1(x)\\f_2(x)\end{bmatrix}=\begin{bmatrix}x_1+x_2+g(x_2)\\\mu^* x_1^2+3g(x_2)\end{bmatrix},
\end{equation*}
where $\mu^*\in\mathbb{R}$ and the nonlinearity $g(x_2)$ are unknown. We want to exploit the knowledge that some coefficients of $f_1$ and $f_2$ are known, that both functions contain $g$, and that $g$ only depends on $x_2$. By Taylor's theorem, let $g(x_2)=T_k(\omega)[g(x_2)]+R_k(\omega)[g(x_2)]$, where $T_k(\omega)[g(x_2)]={b^*}^Tz(x_2)$ with unknown vector $b^*\in\mathbb{R}^{n_z}$, and bounds on $R_k(\omega)[g(x_2)]$ be known as in Lemma~\ref{LemBound}. Then equation \eqref{PolySecBound} directly amounts to the polynomial sector bounds
\begin{align}
\left(f_1(x)-x_1-x_2-\begin{bmatrix}\mu^* & {b^*}^T\end{bmatrix}\begin{bmatrix}0 \\ z(x_2)\end{bmatrix}\right)^2{\leq} R_k^\text{poly}(\omega)[g(x_2)]\label{PolySecf1},\\
\left(f_2(x)-\begin{bmatrix}\mu^* & {b^*}^T\end{bmatrix}\begin{bmatrix}x_1^2 \\ 3z(x_2)\end{bmatrix}\right)^2{\leq}9 R_k^\text{poly}(\omega)[g(x_2)].\label{PolySecf2}
\end{align}
At the same time, let samples $\{(\tilde{y}_i,\tilde{x}_i)_{i=1,\dots,S}\}$ with $\tilde{x}_i=\begin{bmatrix}\tilde{x}_{i1} & \tilde{x}_{i2}\end{bmatrix}^T\in\mathbb{R}^2$, $\tilde{y}_i=\begin{bmatrix}\tilde{y}_{i1} & \tilde{y}_{i2}\end{bmatrix}^T\in\mathbb{R}^2$, and $\tilde{y}_{ij}=f_j(\tilde{x}_i)+\tilde{d}_{ij}$ with $|\tilde{d}_{ij}|\leq\epsilon_{ij}(\tilde{y}_{i},\tilde{x}_{i}),j=1,2,$ be accessible. Then \eqref{AbsRemainderBound} yields for all $i=1,\dots,S$
\begin{align*}
&\left(\tilde{y}_{i1}-\tilde{x}_{i1}-\tilde{x}_{i2}-\begin{bmatrix}0 & z(\tilde{x}_{i2})^T\end{bmatrix}\begin{bmatrix}\mu^* \\ {b^*}\end{bmatrix}\right)^2\\
=&\, (R_k(\omega)[g(\tilde{x}_{i2})]+\tilde{d}_{i1})^2\\
\leq& \left(\sqrt{R^{\text{abs}}_k(\omega)[g(\tilde{x}_{i2})]}+\epsilon_{i1}(\tilde{y}_i,\tilde{x}_i)\right)^2
\end{align*}
and $\left(\tilde{y}_{i2}-\begin{bmatrix}\tilde{x}_{i2}^2 & 3z(\tilde{x}_{i2})^T\end{bmatrix}\begin{bmatrix}\mu^* \\ {b^*}\end{bmatrix}\right)^2	\leq \left(3\sqrt{R^{\text{abs}}_k(\omega)[g(\tilde{x}_{i2})]}+\epsilon_{i2}(\tilde{y}_i,\tilde{x}_i)\right)^2$.
A reformulation directly leads to $2S$ quadratic constraints in the dual space
\begin{align*}
\begin{bmatrix}\begin{bmatrix}\mu^* \\ {b^*}\end{bmatrix}\\ 1	\end{bmatrix}^T\Xi_{ij}	
\begin{bmatrix}\begin{bmatrix}\mu^* \\ {b^*}\end{bmatrix}\\ 1	\end{bmatrix}\leq0,i=1,\dots,S, j=1,2.
\end{align*}
Hence, we receive by means of ellipsoidal outer approximation \eqref{Elli_LMI} and the dualization lemma a set-membership for $\begin{bmatrix}\mu^* & {b^*}\end{bmatrix}$
\begin{align}\label{SetMemfj}
\tilde{\Sigma}_{\omega}=\left\{\begin{bmatrix}\mu \\ {b}\end{bmatrix}:\begin{bmatrix}I_{n_z}\\\begin{bmatrix}\mu & {b}\end{bmatrix}	\end{bmatrix}^T
\varDelta
\begin{bmatrix}I_{n_z}\\\begin{bmatrix}\mu & {b}\end{bmatrix}	\end{bmatrix}\preceq0\right\}.
\end{align}
Lastly, combining the set-membership \eqref{SetMemfj} and the sector bounds \eqref{PolySecf1} and \eqref{PolySecf2} attains a data-based envelope $E(f_1)$ and $E(f_2)$, respectively, analogously to \eqref{EnvelopDB}. \\\indent
The separated envelopes $E(f_1)$ and $E(f_2)$ reduce the conservatism of the derived envelope for $f$ compared to the envelope \eqref{EnvelopDB} with a cumulative polynomial sector bound and set-membership for the TPs of $f$. Since $f_1$ and $f_2$ are only nonlinear with respect to $x_2$, $\omega_i$ for $E'(f)$ should only be distributed in the $x_2$-direction rather than $x_1$.

\subsection{Validation of $\mathit{M_{i,\alpha}}$ from data}\label{SecUpperBound}

The envelope $E(f)$ of Subsection~\ref{SecModel} is based on the knowledge of $M_{i,\alpha}$ from Assumption~\ref{AssBoundDeri}. However, these are usually not known. Since finite differences are sensitive regarding to the noise of samples, we suggest a validation procedure to estimate $M_{i,\alpha}$ from data in a sound way. For that purpose, we provide a guaranteed risk that an unseen sample violates the envelope $E(f)$.\\\indent
Since we can not verify from samples whether some bounds $M_{i,\alpha}$ satisfy Assumption~\ref{AssBoundDeri}, we can not calculate bounds $M_{i,\alpha}$ from data. However, we can validate whether the data-based envelope $E(f)$ from data \eqref{DataSetFunc} for some $M_{i,\alpha}$ explains a set of additional validation data
\begin{equation}\label{DataSetVal}
\{(\tilde{y}_i,\tilde{x}_i)_{i=S+1,\dots,S+V}\}
\end{equation}
with $\tilde{y}_i=f(\tilde{x}_i)+\tilde{d}_i$ and $||\tilde{d}_i||_2\leq\epsilon,i=S+1,\dots,S+V$. Therefore, we suggest in the following a validation procedure to gather suitable bounds $M_{i,\alpha}$.\\\indent 
To this end, observe that if $M_{i,\alpha}$ are chosen too small then the obtained envelope $E(f)$ from data \eqref{DataSetFunc} might be not admissible with the data \eqref{DataSetVal} or the set of feasible polynomials $\tilde{\Sigma}_{\omega}$ might even be empty, i.e., LMI \eqref{Elli_LMI} is infeasible. Note that even if the matrix $\tilde{Z}$ has full row rank, the LMI \eqref{Elli_LMI} can be infeasible as $M_{i,\alpha}$ is supposed to satisfy Assumption~\ref{AssBoundDeri} in Proposition~\ref{Primal_elli}. On the other hand, if function $f$ is $k+1$ times continuously differentiable, then the existence of bounds $M_{i,\alpha}$ on the compact domain $\mathbb{X}$ is guaranteed. Hence, there exist $M_{i,\alpha}$ that render $E(f)$ to be consistent with the data \eqref{DataSetFunc} and \eqref{DataSetVal}. Due to these observations, we consider a set-membership validation procedure \cite{Milanese} (Definition 2) to find suitable $M_{i,\alpha}$. Our procedure additionally achieves a guaranteed bound on the risk that an unseen sample is not contained in $E(f)$.
\begin{algo}[Set-membership validation procedure]\label{SVP}\indent
	\begin{itemize}
		\item[$0.)$] Set $\underline{M}:=0$ and choose stopping criterion $\delta>0$, a confidence level $c\in(0,1)$, and maximal risk $\bar{\mu}\in(0,1)$.
		Find $\bar{M}$ such that $\tilde{\Sigma}_{\omega}\neq\emptyset$ from data \eqref{DataSetFunc} by solving \eqref{Elli_LMI} for $M_{i,\alpha}=\bar{M},i=1,\dots,n_y,|\alpha|=k+1,$ and $\tilde{\mu}+\epsilon_c\leq\bar{\mu}$ with $\epsilon_c=\sqrt{\frac{\text{ln}(2/c)}{2V}}$,
		\begin{align}
		\tilde{\mu}&=\frac{1}{V}\sum_{i=S+1}^{S+V}I(\tilde{x}_i,\tilde{y}_i),\label{EmpRisk}\\
		I(\tilde{x}_i,\tilde{y}_i)&=\left\{ \begin{array}{cl} 
		0&	 \text{if } \{\tilde{x}_i\}\times\mathcal{B}_{\epsilon}(\tilde{y}_i)\subseteq E(f) \\
		1& \text{otherwise, } \end{array} \right..\label{IndicatorFcn}		
		\end{align}
		$\mathcal{B}_r(\xi)$ denotes the ball around $\xi$ with radius $r>0$.
		\item[$1.)$] Set $M_{i,\alpha}:=\frac{\bar{M}+\underline{M}}{2},i=1,\dots,n_y,|\alpha|=k+1$. Compute $\tilde{\Sigma}_{\omega}$ from data \eqref{DataSetFunc} by solving \eqref{Elli_LMI} and $\tilde{\mu}$ from \eqref{EmpRisk}.
		\item[$2.)$] If $\tilde{\Sigma}_{\omega}\neq\emptyset$ and $\tilde{\mu}+\epsilon_c\leq\bar{\mu}$, then set $\bar{M}:=\frac{\bar{M}+\underline{M}}{2}$. Otherwise, set $\underline{M}:=\frac{\bar{M}+\underline{M}}{2}$.
		\item[$3.)$] If $\bar{M}-\underline{M}<\delta$, then set $M_{i,\alpha}=\bar{M},i=1,\dots,n_y,|\alpha|=k+1,$ and stop. Otherwise go to step 1.).
	\end{itemize}
\end{algo}

In terms of the previous comments, it is clear that there always exists an $\bar{M}$ in the initialization step $0.)$. Moreover, the algorithm is based on a bisection procedure to find the smallest $M_{i,\alpha}=\bar{M},i=1,\dots,n_y,|\alpha|=k+1,$ such that the envelope $E(f)$ explain the data \eqref{DataSetFunc} and partially \eqref{DataSetVal}. However, the estimated $M_{i,\alpha}$ nonetheless might violate Assumption~\ref{AssBoundDeri} as $\tilde{\Sigma}_{\omega}$ is not a tight description of ${\Sigma}_{\omega}$ and the noise characterization of Assumption~\ref{AssNoiseBound} might also be conservative. Hence, for example, if the envelope is admissible with the data for $\bar{M}=0$ then $f$ might be polynomial, the data might be too sparse, or the conservatism of $\tilde{\Sigma}_{\omega}$ and the noise bound from Assumption~\ref{AssNoiseBound} might cover the non-polynomial nonlinearities. Due to the inclusion of the noise in the computation of $\tilde{\Sigma}_{\omega}$, it is expected that the set-membership validation procedure is less sensitive regarding noise than estimating the high order partial derivatives of $f$ by finite differences. \\\indent 
To check efficiently whether the set $\{\tilde{x}_i\}\times\mathcal{B}_{\epsilon}(\tilde{y}_i)$ is included in $E(f)$, consider the set of all outputs $y$ with $(\tilde{x}_i,y)\in E(f)$ for a validation point $\tilde{x}_i$. By the definition of $E(f)$ in \eqref{EnvelopDB} and definition of $p_\text{sec}$ in \eqref{Envelop1}, this set of outputs is given by the union of balls $\mathcal{B}_{\sum_{j=1}^{n_y}R^{\text{poly}}_k(\omega)[f_j(\tilde{x}_i)]}(Az(\tilde{x}_i))$. The center $Az(\tilde{x}_i)$ corresponds to the ellipse
\begin{align*}
&\star^T(-\Delta_{3\text{p}})\cdot(Az(\tilde{x}_i)-\Delta_{3\text{p}}^{-1}\Delta_{2\text{p}}z(\tilde{x}_i))\\
\leq& z(\tilde{x}_i)^T(\Delta_{1\text{p}}+\Delta_{2\text{p}}^T\Delta_{3\text{p}}^{-1}\Delta_{2\text{p}})z(\tilde{x}_i),
\end{align*}
which follows by left- and right-multiplication of \eqref{Sigma_A} by $z(\tilde{x}_i)$. 
Note that this ellipsoid contains  $\mathcal{B}_{{z(\tilde{x}_i)^T(\Delta_{1\text{p}}+\Delta_{2\text{p}}^T\Delta_{3\text{p}}^{-1}\Delta_{2\text{p}})z(\tilde{x}_i)}/\lambda_{\text{max}}(-\Delta_{3\text{p}})}(\Delta_{3\text{p}}^{-1}\Delta_{2\text{p}}z(\tilde{x}_i))$ since $\Delta_{3\text{p}}\prec0$ follows from  the dualization lemma in Proposition~\ref{Primal_elli}. Hence, $I(\tilde{x}_i,\tilde{y}_i)=0$ if $\mathcal{B}_{\epsilon}(\tilde{y}_i)\subseteq\mathcal{B}_{\tilde{r}_i}(\Delta_{3\text{p}}^{-1}\Delta_{2\text{p}}z(\tilde{x}_i))$ with radius $\tilde{r}_i=\sum_{i=j}^{n_y}R^{\text{poly}}_k(\omega)[f_j(\tilde{x}_i)]+{z(\tilde{x}_i)^T(\Delta_{1\text{p}}+\Delta_{2\text{p}}^T\Delta_{3\text{p}}^{-1}\Delta_{2\text{p}})z(\tilde{x}_i)}/\lambda_{\text{max}}(-\Delta_{3\text{p}})$ which is equivalent to
\begin{equation}\label{CheckInclusion}
||\tilde{y}_i-\Delta_{3\text{p}}^{-1}\Delta_{2\text{p}}z(\tilde{x}_i)||_2+\epsilon\leq \tilde{r}_i.
\end{equation}
Summarizing, Algorithm~\ref{SVP} requires in each iteration to solve the LMI~\eqref{Elli_LMI} and to check \eqref{CheckInclusion} for each validation data point from \eqref{DataSetVal}. While the complexity of both conditions scales linearly in the number of training and validation data, respectively, the bisection converges within few iterations in our examples. Hence, the computation time of Algorithm~\ref{SVP} is typically smaller than the time to solve the SOS optimization problem for verifying dissipativity from Section~\ref{SecDissi}. \\\indent
In case the data \eqref{DataSetFunc} and \eqref{DataSetVal} cover densely the whole domain $\mathbb{X}$, it is expected that the derived estimations of $M_{i,\alpha}$ are reasonable and $G(f)$ is included in $E(f)$. In Algorithm~\ref{SVP}, the accuracy of the envelope $E(f)$ for the graph of $f$ is measured by the probability that an unseen sample $(\tilde{x},f(\tilde{x}))\in G(f)$ is not contained in $E(f)$, i.e., $P[(\tilde{x},f(\tilde{x}))\notin E(f)]$. By Algorithm~\ref{SVP}, we can guarantee with high confidence that this probability is less than $\bar{\mu}$ as shown in the next theorem.

\begin{thm}[Probabilistic accuracy of $E(f)$]\label{ThmProbabilisticAcc}
	Let $\tilde{x}_i$ and $\tilde{d}_i,i=S+1,\dots,S+V,$ from validation data \eqref{DataSetVal} be independently sampled from distribution $\Omega_x$ over $\mathbb{X}$ and distribution $\Omega_d$ over $\mathcal{B}_{\epsilon}(0)$, respectively. Then	$P[(\tilde{x},f(\tilde{x}))\notin E(f)]\leq\tilde{\mu}+\epsilon_c$ with confidence at least $1-c$ for $\tilde{x}$ sampled from $\Omega_x$.
\end{thm}
\begin{proof}
	To conclude on $P[(\tilde{x},f(\tilde{x}))\notin E(f)]$ from empirical sampling, we apply Hoeffding’s inequality as, e.g., in \cite{Hertneck}. To this end, let $\tilde{d}$ be sampled from $\Omega_d$ over $\mathcal{B}_{\epsilon}(0)$. Then, for the indicator function $I$ from \eqref{IndicatorFcn},
	\begin{align}\label{ProbIneq}
	P[(\tilde{x},f(\tilde{x}))\notin E(f)]\leq P[I(\tilde{x},f(\tilde{x})+\tilde{d})=1]
	\end{align}  
	because $f(\tilde{x})\in\mathcal{B}_{\epsilon}(f(\tilde{x})+\tilde{d})$. Furthermore, since $\tilde{x}_i$ and $\tilde{d}_i$ are each independently sampled and identically distributed (iid), $\tilde{y}_i$ are iid, and thus $I(\tilde{x}_i,\tilde{y}_i)$ as well. Together with $\tilde{y}_i=f(\tilde{x}_i)+\tilde{d}_i$, Hoeffding’s inequality \cite{Hoeffding}
	\begin{equation*}
	P\left[\Big|\Big|\tilde{\mu}-P[I(\tilde{x},f(\tilde{x})+\tilde{d})=1]\Big|\Big|_2\geq\epsilon_c\right]\leq 2\,\text{exp}(-2V\epsilon_c^2),
	\end{equation*}
	with the empirical risk $\tilde{\mu}$ from \eqref{EmpRisk}, implies that
	\begin{equation*}
	P[I(\tilde{x},f(\tilde{x})+\tilde{d})=1]\leq\tilde{\mu}+\epsilon_c
	\end{equation*}	
	with confidence at least $1-c$. Together with \eqref{ProbIneq}, the claim is proven.
\end{proof}

In Algorithm~\ref{SVP}, if $\tilde{\mu}+\epsilon_c$ is larger than the chosen critical risk $\bar{\mu}$, then the envelope $E(f)$ does probably not contain the whole graph of $f$, and hence $M_{i,\alpha}$ must be increased. Since we assume for verifying system properties in the subsequent section that the envelope $E(f)$ includes $G(f)$, $\tilde{\mu}+\epsilon_c$ also corresponds to the risk that the determined system properties does not hold.  
\begin{rmk}[Extensions of Algorithm~\ref{SVP}]\label{RmkChoiceofk}
	Algorithm~\ref{SVP} can be adapted by a multidimensional bisection if equal $M_{i,\alpha}$ are too conservative. Furthermore, if $k$ is not known, then Algorithm~\ref{SVP} can be extended by an iteration over $k$. To this end, we perform Algorithm~\ref{SVP}, e.g., for linear TP approximation by setting $k=1$ and then increase $k$ until an envelope with small risk $\bar{\mu}$ can still be constructed. The algorithm can also be considered for various TPs in order to exploit envelope $E'(f)$.	
\end{rmk}
\begin{rmk}[$M_{i,\alpha}$ as hyperparameter]
	$M_{i,\alpha}$ can be seen as hyperparameter of the data-driven representation $E(f)$. Hyperparameters are typically chosen such that the likelihood to observe the available data is maximized. Similarly, here $M_{i,\alpha}$ minimize the risk that an unseen sample is not contained in $E(f)$. Moreover, hyperparameters are commonly validated using validation procedures \cite{MachineTS} (Section 4.2) similar to Algorithm~\ref{SVP}.
\end{rmk}

\section{Data-driven dissipativity verification for nonlinear systems}\label{SecDissi}

\begin{figure*}
	\setcounter{MYtempeqncnt}{\value{equation}}
	\setcounter{equation}{27}
	\begin{align}\label{SOSCond}
	\Psi(x,u)=\Upsilon^T
	\text{diag}\left(\begin{bmatrix}0 & -\mathcal{X}\frac{\partial m(x)}{\partial x}\\ -\frac{\partial m(x)}{\partial x}^T\mathcal{X} & 0\end{bmatrix}\Bigg| s(x,u)\Bigg|\sum_{i=1}^{n_p}t_i(x,u)p_i(x,u)\bigg|\tau_{\text{sec}}(x,u)\Phi(\omega)\bigg|\tau_{\text{sm}}(x,u)\varDelta_*
	\right)\Upsilon
	\end{align}
	\setcounter{equation}{\value{MYtempeqncnt}}
	\hrulefill
\end{figure*}
In this section, we apply the data-based envelopes $E(f)$ and $E'(f)$ for nonlinear functions to develop a framework to check whether a nonlinear dynamical system is dissipative. \\\indent
Related to the setup in Section~\ref{SecProblemSetup}, we examine the unknown nonlinear continuous-time system
\begin{align}\label{TrueSystem}
\dot{x}(t)=f(x(t),u(t))=\begin{bmatrix}\vspace{0.01cm}\begin{matrix}
f_1(x(t),u(t))\vspace{-0.1cm}\\\vdots\\f_{n_x}(x(t),u(t))\end{matrix}\end{bmatrix} 
\end{align}
within the compact and convex operation set 
\begin{equation}\label{Constraints}
\begin{aligned}
\mathbb{P}{=}\{(x,u)\in\mathbb{R}^{n_x}\times\mathbb{R}^{n_u}&: p_i(x,u)\leq 0,\  p_i\in\mathbb{R}[x,u],\\ &\hspace{2.0cm}i=1,\dots,n_p\}.
\end{aligned}
\end{equation}
We assume $f:\mathbb{P}\rightarrow\mathbb{R}^{n_x}$ is $k+1$ times continuously differentiable and Assumption~\ref{AssBoundDeri} is satisfied. Moreover, let $\mathbb{X}\subset\mathbb{R}^{n_x}$ denote the projection of ${\mathbb{P}}$ on $\mathbb{R}^{n_x}$. We suppose the access to measurements
\begin{equation}\label{DataSys}
\{(\dot{\tilde{x}}_i,\tilde{x}_i,\tilde{u}_i)_{i=1,\dots,S}\}
\end{equation}
in presence of noise, i.e., $\dot{\tilde{x}}_i=f(\tilde{x}_i,\tilde{u}_i)+\tilde{d}_i$ where $\tilde{d}_1,\dots,\tilde{d}_S$ satisfy Assumption~\ref{AssNoiseBound}. The disturbance $\tilde{d}_i$ cover inaccurate estimations of $\dot{\tilde{x}}_i$, potential effects of noisy measurements of $\tilde{x}_i$ on the estimation of $\dot{\tilde{x}}_i$, and process noise. The samples can be measured from a single trajectory of system~\eqref{TrueSystem} or from multiple trajectories with even varying sampling rates. In the sequel, we aim to determine dissipativity properties for system~\eqref{TrueSystem} from the noisy input-state-velocity data \eqref{DataSys}.   

\begin{defn}[Dissipativity]\label{DissiDef}
	System~\eqref{TrueSystem} is dissipative on $\mathbb{P}$ with respect to the supply rate $s:\mathbb{P}\rightarrow\mathbb{R}$ if there exists a continuous storage function $\lambda:\mathbb{X}\rightarrow[0,\infty)$ such that
	\begin{align}\label{dissipativityInquInt}
	\lambda(x(T))\leq 	\lambda(x(0))+\int_{0}^{T}s(x(t),u(t))\text{d}t 
	\end{align}
	for all $T\geq0$ and all inputs $u:[0,T]\rightarrow\mathbb{R}^{n_u}$ such that $(x(t),u(t))\in\mathbb{P},t\in[0,T]$. If $\lambda(x)$ is continuously differentiable, then this is implied by
	\begin{align}\label{dissipativityInqu}
	\frac{\partial{\lambda(x)}}{\partial x}f(x,u)\leq s(x,u),\quad \forall (x,u)\in{\mathbb{P}}.
	\end{align}
\end{defn} 
\vspace{0.5cm}

We assess a regional rather than a global analysis of dissipativity as an advantage because $\mathbb{P}$ can be chosen as the physical constraints of states and inputs. Therefore, regional dissipativity provides more precise insights into the system behaviour. Moreover, since we do not assume knowledge on the structure of the dynamics, a global model cannot be learned from data or would be conservative. Even using first principles, an accurate global model for a nonlinear system is typically not available. Also Assumption~\ref{AssBoundDeri} can be restrictive for $(x,u)\in\mathbb{R}^{n_x}\times\mathbb{R}^{n_u}$. \\\indent
As an example for a controller design based on a regional dissipativity property, let a system be passive on $\mathbb{X}\times\mathbb{R}^{n_u}$ for a positive definite storage function. Although there exist open-loop trajectories leaving $\mathbb{X}$ and hence may not satisfy the passivity property, we can use the regional passivity for the design of an output feedback $u=-Ky,K\succeq0,$ to stabilize the system. Indeed, the largest sublevel-set of the storage function within $\mathbb{X}$ is invariant w.r.t. the closed-loop dynamics for any $K\succeq0$. Hence, the closed-loop trajectories stay in $\mathbb{X}$ and satisfy the passivity condition. Since the controller achieves invariance of $\mathbb{X}$, we only verify dissipativity for trajectories staying in $\mathbb{X}$ without guaranteeing that open-loop solutions stay in $\mathbb{X}$. To achieve invariance by a controller, control invariance of $\mathbb{X}$ is necessary, i.e., for all initial conditions $x_0=x(0)\in\mathbb{X}$ and $T\geq0$ there exists an input $u:[0,T]\rightarrow\mathbb{R}^{n_u}$ such that $(x(t),u(t))\in\mathbb{P},t\in[0,T]$. Control invariance can be verified from data by determining a regionally stabilizing state feedback \cite{Gaussian_noise} (Section 6.2).\\\indent
While determining dissipativity for known nonlinear systems using Taylor approximation is examined thoroughly in \cite{TaylorPassivity}, we extend this result to a data-driven method exploiting the envelope $E(f)$. Since $f$ is contained in $E(f)$, the ground-truth system \eqref{TrueSystem} is dissipative if all $\left(\xi,\dot{x}\right)\in E(f)$ with $\xi=\begin{bmatrix}x^T&u^T\end{bmatrix}^T$ satisfy the dissipativity criterion \eqref{dissipativityInqu}.

\begin{thm}[Dissipativity verification by $E(f)$]\label{ThmDissi}
	Let Assumption~\ref{AssBoundDeri} and \ref{AssNoiseBound} hold. Moreover, let the matrix $\tilde{Z}=\begin{bmatrix}z(\tilde{x}_1,\tilde{u}_1) & \cdots & z(\tilde{x}_S,\tilde{u}_S)\end{bmatrix}$ have full row rank for the TP $T_k(\omega)[f(x,u)]=A^*z(x,u)$ at $\omega=[\omega_x^T\ \omega_u^T]^T$. Then the set-membership 
	\begin{equation}\label{Setmem}
	\tilde{\Sigma}_{\omega}=\left\{A:\begin{bmatrix}I_{n_{z}}\\A\end{bmatrix}^T\varDelta_{*}\begin{bmatrix}I_{n_{z}}\\A\end{bmatrix}\preceq0\right\}
	\end{equation}
	for $A^*$ exists. Moreover, system~\eqref{TrueSystem} is dissipative on \eqref{Constraints} regarding the supply rate $s\in\mathbb{R}[x,u]$ if there exist a storage function $m(x)^T\mathcal{X}m(x)$ with $\mathcal{X}\succeq0$ and $m\in\mathbb{R}[x]^{n_m}$ and polynomials $t_1,\dots,t_{n_p},\tau_{\text{sec}},\tau_{\text{sm}}\in\text{SOS}[x,u]$ such that $\Psi(x,u)$ in \eqref{SOSCond} is an SOS matrix with
	\begin{equation*}
	\Upsilon = \begin{bmatrix}\hspace{0.2cm}\begin{matrix}	 0 & 0 & m(x) \\ I_{n_x} & 0 & 0\\	\hline 
	0 & 0 & 1\\\hline 
	0 & 0& 1\\\hline I_{n_x} & -I_{n_x} & 0  \\  0 & 0 & 1\\\hline 0 & 0 & z(x,u)\\ 0 & I_{n_x} & 0\end{matrix} \hspace{0.2cm}\end{bmatrix}.	
	\end{equation*}
	\setcounter{equation}{28}
\end{thm}
\begin{proof}
	\eqref{Setmem} exists by Proposition~\ref{Primal_elli}. Furthermore, due to the fact that any SOS matrix is positive semidefinite, 
	$0\leq\sigma^T\Psi\sigma$ with $\sigma=\begin{bmatrix}\dot{x}^T &  (Az(x,u))^T & 1\end{bmatrix}^T$ and
	\begin{align*}
	\sigma^T\Psi\sigma=&\, -m^T\mathcal{X}\frac{\partial m}{\partial x}\dot{x}-\dot{x}^T\frac{\partial m}{\partial x}^T\mathcal{X}m+s+\sum_{i=1}^{n_p}t_ip_i\\	
	&+\tau_{\text{sec}}\star^T
	\Phi \cdot\begin{bmatrix}\begin{array}{c}\begin{matrix}
	I_{n_x} & -I_{n_x} & 0 \\\hline 0 & 0 & 1\end{matrix}\end{array}\end{bmatrix}\begin{bmatrix}\dot{x}\\ Az\\1	\end{bmatrix}\\
	&+\tau_{\text{sm}}\star^T
	\varDelta_*\cdot
	\begin{bmatrix}I_{n_z}\\ A	\end{bmatrix}z.
	\end{align*}
	Since $p_i(x,u)\leq0$ by \eqref{Constraints}, $p_\text{sec}(\xi,\dot{x},A)\leq0$ by \eqref{Envelop1}, and $\star^T\varDelta_*\cdot\begin{bmatrix}I_{n_z}& A^T	\end{bmatrix}^T\preceq0$ by \eqref{Setmem}, the generalized S-procedure of Proposition~\ref{SOSRelaxation} implies 
	\begin{align}\label{DissiInqu}
	0\leq {-}m(x)^T\mathcal{X}\frac{\partial m(x)}{\partial x}\dot{x}{-}\dot{x}^T\frac{\partial m(x)}{\partial x}^T\mathcal{X}m(x){+}s(x,u)
	\end{align}
	for all $\left(\xi,\dot{x}\right)\in E(f)$. Given that \eqref{DissiInqu} corresponds to \eqref{dissipativityInqu} for $\lambda(x)=m(x)^T\mathcal{X}m(x)$ and $G(f)\subseteq E(f)$, the claim is proven. 
\end{proof}

Note that the rank condition of $\tilde{Z}$ depends on the TP as $z(x,u)$ includes $\omega$. In view of \cite{Ellipsoid_DePersis}, this rank condition can be seen as a persistence of excitation condition of the data. Since $\Psi$ is linear regarding $\mathcal{X},t_1,\dots,t_{n_p},\tau_{\text{sec}}$, and $\tau_{\text{sm}}$, these parameters can be optimized to render $\Psi$ an SOS matrix using standard SOS optimization. The vector $m(x)$ contains usually monomials of $x$ where their degree can iteratively be increased to refine the system property. As indicated in Section~\ref{Sec_Improv}, it might be beneficial to exploit prior knowledge on $f$ by polynomial sector bounds as \eqref{PolySecf1} and set-memberships as \eqref{SetMemfj}. These can analogously be employed in Theorem~\ref{ThmDissi} by the generalized S-procedure. We finally summarize the data-driven verification of dissipativity from data in Algorithm~\ref{Algorithm2}.

\begin{algo}[Data-driven dissipativity verification]\label{Algorithm2}\indent
	\begin{itemize}	
		\item[$0.)$] Given data \eqref{DataSys} such that $\tilde{Z}$ has full row rank and the knowledge that $f$ from \eqref{TrueSystem} is at least $k+1$ times continuously differentiable. Choose $\omega\in\mathbb{R}^{n_x+n_u}$.
		\item[$1.)$] If bounds $M_{i,\alpha}$ are not available, then follow Algorithm~\ref{SVP} by splitting data \eqref{DataSys} into data to compute $E(f)$ and validation data.			
		\item[$2.)$] Compute $\Xi_\text{p}$ in Proposition~\ref{Primal_elli} by solving LMI~\eqref{Elli_LMI} and compute $\tilde{\Sigma}_\omega$ as in \eqref{Sigma_A}.		
		\item[$3.)$] For a chosen $m(x)$, find $\mathcal{X}\succeq0$ and polynomials $t_1,\dots,t_{n_p},\tau_{\text{sec}},\tau_{\text{sm}}\in\text{SOS}[x,u]$ such that $\Psi(x,u)$ is an SOS matrix. 
	\end{itemize}
\end{algo}

\begin{figure*}
	\setcounter{MYtempeqncnt}{\value{equation}}
	\setcounter{equation}{32}
	\begin{align}\label{SOSCond1}
	\Psi_j=\tilde{\Upsilon}^T
	\text{diag}\left(\begin{bmatrix}0 & -\mathcal{X}_j\frac{\partial m}{\partial x}\\ -\frac{\partial m}{\partial x}^T\mathcal{X}_j & 0\end{bmatrix}\Bigg| s\Bigg|\sum_{i=1}^{n_v}r_{ji}v_i\bigg|\sum_{i=1}^{n_{jc}}t_{ji}c_{ji}\bigg|\tau_{j1}\Phi(\omega_1)\bigg|\cdots\bigg|\tau_{jn_\omega}\Phi(\omega_{n_\omega})\bigg|\rho_{j1}\varDelta_{*,1}\bigg|\cdots\bigg|\rho_{jn_\omega}\varDelta_{*,n_\omega}
	\right)\tilde{\Upsilon}
	\end{align}
	\setcounter{equation}{\value{MYtempeqncnt}}
	\hrulefill
\end{figure*}

\subsection{Dissipativity verification by E\textquotesingle(f)}

We show next how the combination of multiple envelopes from TPs around various points \eqref{InterTSE} can be incorporated into Theorem~\ref{ThmDissi} to reduce the conservatism of the data-based dissipativity verification. For that purpose, $\Psi$ can be extended by applying the generalized S-procedure as in Theorem~\ref{ThmDissi} for the polynomial sector bound \eqref{Envelop1} and set-membership \eqref{Sigma_A} of each TP. However, due to the conservatism of the S-procedure, the problem emerges that only one envelope is taken into account as already observed in \cite{MartinGraphAppro}. This issue can be circumvented by means of a partition of $\mathbb{X}$ as common for switched systems \cite{SwitchedSys}. To this end, let the operation set $\mathbb{P}=\mathbb{X}\times\mathbb{U}$ with $\mathbb{U}=\{u: v_{i}(u)\leq 0,\  v_{i}\in\mathbb{R}[u],i=1,\dots,n_{v}\}$. Additionally, we consider a partition of $\mathbb{X}$ into subsets $\mathbb{X}_j,j=1,\dots, n_\text{part}$, each described by polynomial inequalities $\mathbb{X}_j=\{x: c_{ji}(x)\leq 0,\  c_{ji}\in\mathbb{R}[x],i=1,\dots,n_{jc}\},$ with $\mathbb{X}=\bigcup_{j=1}^{n_\text{part}}\mathbb{X}_j$, $\text{int}\left(\mathbb{X}_j\right)\cap\text{int}\left(\mathbb{X}_l\right)=\emptyset,j\neq l$. Here $\text{int}$ denotes the interior of a set. The boundary between $\mathbb{X}_j$ and $\mathbb{X}_l, j\neq l,$ is given by $S_{jl}=\{x:h_{jl0}(x)=0,h_{jlm}(x)\leq0,m=1,\dots,n_{jh}\},$ which might be empty. Then dissipativity can be determined for system~\eqref{TrueSystem} from data by $E'(f)$.

\begin{coro}[Dissipativity verification by $E'(f)$]\label{ThmDissi2}
	Let Assumption~\ref{AssBoundDeri} and \ref{AssNoiseBound} hold. Moreover, let the matrices $\tilde{Z}_\ell=\begin{bmatrix}z_\ell(\tilde{x}_1,\tilde{u}_1) & \cdots & z_\ell(\tilde{x}_S,\tilde{u}_S)\end{bmatrix}$ for the TPs $T_k(\omega_\ell)[f(x,u)]=A_\ell^* z_\ell(x,u)$ around $\omega_\ell=[\omega_{x,\ell}^T\ \omega_{u,\ell}^T]^T,\ell=1,\dots,n_\omega,$ have full row rank. Then set-memberships 
	\begin{equation}\label{SetMems}
	\tilde{\Sigma}_{\omega_{\ell}}=\left\{A_\ell:\begin{bmatrix}I_{n_{z_\ell}}\\A_\ell\end{bmatrix}^T\varDelta_{*,\ell}\begin{bmatrix}I_{n_{z_\ell}}\\A_\ell\end{bmatrix}\preceq0\right\}
	\end{equation}
	for $A_\ell^*$ exist. Moreover, system~\eqref{TrueSystem} is dissipative on $\mathbb{X}\times\mathbb{U}$ regarding the supply rate $s\in\mathbb{R}[x,u]$ if there exist for $j=1,\dots,n_\text{part}$ a piecewise-defined storage function
	\begin{equation*}
	\lambda(x)=m(x)^T\mathcal{X}_jm(x)\in\mathbb{R}[x],\mathcal{X}_j\in\mathbb{R}^{n_m\times n_m},\text{for\ } x\in\mathbb{X}_j,
	\end{equation*}
	with $\lambda(0)=0$ and $m\in\mathbb{R}[x]^{n_m}$, and polynomials $s_{jl}\in\mathbb{R}[x],l=1,\dots,n_\text{part},l\neq j$, $r_{ji}\in\text{SOS}[x,u],i=1,\dots,n_v$, $t_{ji}\in\text{SOS}[x,u],i=1,\dots,n_{jc}$, $\tau_{ji},\rho_{ji}\in\text{SOS}[x,u],i=1,\dots,n_\omega$, and $\tau_{\text{pos},ji}\in\text{SOS}[x],i=1,\dots,n_{jc}$ such that 
	\begin{align}
	&m^T\mathcal{X}_jm+\sum_{i=1}^{n_{jc}}\tau_{\text{pos},ji}c_{ji}\in\text{SOS}[x],\label{Pos}\\
	&m^T\mathcal{X}_jm+s_{jl}h_{jl0}-m^T\mathcal{X}_lm=0,\label{Cont}
	\end{align}
	for $l=1,\dots,n_\text{part},l\neq j$ and $S_{jl}\neq\emptyset$, and $\Psi_j(x,u)$ in \eqref{SOSCond1} are SOS matrices with
	\begin{equation*}
	\tilde{\Upsilon} = \begin{bmatrix}\hspace{0.2cm}\begin{matrix}	 0 & 0 & \cdots & 0 & m(x) \\ I_{n_x} & 0 & \cdots & 0& 0\\	\hline 
	0 & 0& \cdots & 0 & 1\\\hline  0 & 0& \cdots & 0 & 1\\\hline 
	0 & 0& \cdots & 0& 1\\\hline I_{n_x} & -I_{n_x}& \cdots & 0 & 0  \\  0 & 0& \cdots & 0 & 1\\\hline \vdots & \vdots & \ddots & \vdots & \vdots \\\hline I_{n_x} & 0& \cdots & -I_{n_x} & 0  \\  0 & 0& \cdots & 0 & 1\\\hline 0 & 0& \cdots & 0 & z_1(x,u)\\ 0 & I_{n_x}& \cdots & 0 & 0\\\hline \vdots & \vdots & \ddots & \vdots & \vdots\\\hline 0 & 0& \cdots & 0 & z_{n_\omega}(x,u)\\ 0 & 0& \cdots & I_{n_x} & 0 \end{matrix} \hspace{0.2cm}\end{bmatrix}.	
	\end{equation*}
	\setcounter{equation}{33}
\end{coro}
\begin{proof}
	Set-memberships \eqref{SetMems} exist by Proposition~\ref{Primal_elli}. By \eqref{Pos}, Proposition~\ref{SOSRelaxation} implies $\lambda(x)\geq0$ for all $x\in\mathbb{X}$. Equation \eqref{Cont} ensures that $\lambda(x)$ is continuous as shown in \cite{SwitchedSys}. For that reason, it remains to prove that \eqref{dissipativityInquInt} holds for all trajectories within $\mathbb{P}$.\\\indent 
	To this end, we proceed as in the proof of Theorem~\ref{ThmDissi}, but multiplying $\Psi_j$ from both sides with $\begin{bmatrix}\dot{x}^T &  (A_1z_1(x,u))^T & \cdots &(A_{n_\omega}z_{n_\omega}(x,u))^T & & 1\end{bmatrix}^T$. The generalized S-procedure implies
	\begin{align}\label{DissiInequ}
	m(x)^T\mathcal{X}_j\frac{\partial m(x)}{\partial x}\dot{x}+\dot{x}^T\frac{\partial m(x)}{\partial x}^T\mathcal{X}_jm(x)\leq s(x,u)
	\end{align}
	for all $(x,u)\in\mathbb{X}_j\times\mathbb{U}$ and $\left(\xi,\dot{x}\right)\in E'(f)$. Since $E'(f)$ contains the graph of $f(x,u)$ and \eqref{DissiInequ} corresponds to \eqref{dissipativityInqu}, $m(x)^T\mathcal{X}_jm(x)$ satisfies \eqref{dissipativityInquInt} for all trajectories of \eqref{TrueSystem} within $\mathbb{X}_j\times\mathbb{U}$. To conclude that $\lambda(x)$ is a storage function satisfying \eqref{dissipativityInquInt} for all trajectories of \eqref{TrueSystem} within $\mathbb{X}\times\mathbb{U}$, we study exemplarily a trajectory with transition between two subsets
	\begin{equation*}
	(x(t),u(t))\in\begin{cases}
	\text{int}\left(\mathbb{X}_{j}\right)\times\mathbb{U},\quad &t_0\leq t< t_1\\S_{jl}\times\mathbb{U},\quad &t_1\leq t< t_2\\\text{int}\left(\mathbb{X}_{l}\right)\times\mathbb{U},\quad &t_2\leq t\leq t_3
	\end{cases}.
	\end{equation*}
	Since $(x(t),u(t))\in\mathbb{X}_l\times\mathbb{U},t_1\leq t\leq t_3,$ and $m(x)^T\mathcal{X}_{l}m(x)$ satisfies \eqref{dissipativityInquInt} for all trajectories of \eqref{TrueSystem} within $\mathbb{X}_{l}\times\mathbb{U}$, then
	\begin{align*}
	\star^T\mathcal{X}_{l}\cdot m(x(t_3))\leq \star^T\mathcal{X}_{l}\cdot m(x(t_1))+\int_{t_1}^{t_3}s(x(t),u(t))\text{d}t.
	\end{align*}
	According to $m(x)^T\mathcal{X}_{l}m(x)=m(x)^T\mathcal{X}_{j}m(x)$ for $x\in S_{jl}$ and 
	\begin{align*}
	\star^T\mathcal{X}_{j}\cdot m(x(t_1))\leq \star^T\mathcal{X}_{j}\cdot m(x(t_0))+\int_{t_0}^{t_1}s(x(t),u(t))\text{d}t,
	\end{align*}
	it holds that 
	\begin{align*}
	\star^T\mathcal{X}_{l}\cdot m(x(t_3))\leq \star^T\mathcal{X}_{j}\cdot m(x(t_0))+\int_{t_0}^{t_3}s(x(t),u(t))\text{d}t.
	\end{align*}		
	Since $\star^T\mathcal{X}_{l}\cdot m(x(t_3))=\lambda(x(t_3))$ and $\star^T\mathcal{X}_{j}\cdot m(x(t_0))=\lambda(x(t_0))$ and $j$ and $l$ are arbitrary, $\lambda(x)$ is a storage function satisfying \eqref{dissipativityInquInt} for all trajectories of \eqref{TrueSystem} within $\mathbb{X}\times\mathbb{U}$.  
\end{proof}

Note that the rank condition of $\tilde{Z}_{\ell},{\ell}=1,\dots, n_\omega,$ must be checked for each TP separately as the polynomial vectors $z_{\ell}(x,u)$ depend on $\omega_{\ell}$. Furthermore, the TPs usually provide a good approximation of the nonlinear dynamics only close to $\omega_\ell$. Thereby, we could also employ for each $\Psi_j(x,u)$ only one envelope from a TP at $\omega_\ell\in\mathbb{X}_j\times\mathbb{U}$ instead of all envelopes. Thus, only multiplier $\tau_{j}$ and $\rho_{j}$ for $j=1,\dots,n_\text{part}$ instead of $\tau_{j\ell},\rho_{j\ell}\in\text{SOS}[x,u],{\ell}=1,\dots,n_\omega,$ would be required such that the computational complexity of the SOS optimization problem in Corollary~\ref{ThmDissi2} is reduced. To further reduce the computational complexity, a joint storage function with $\mathcal{X}=\mathcal{X}_1=\dots=\mathcal{X}_{n_\text{part}}\succeq0$ is conceivable. The data-driven verification of dissipativity with Corollary~\ref{ThmDissi2} pursues Algorithm~\ref{Algorithm2} but computing multiple set-memberships \eqref{SetMems} in step 2.) and checking the SOS conditions in Corollary~\ref{ThmDissi2} in step 3.). \\\indent
While $E'(f)$ can improve the inference on the dissipativity property, the choice of $\omega_{\ell}$ is decisive but often unclear a priori. To this end, we suggest an iterative procedure to determine a meaningful choice of $\omega_{\ell}$ in the next remark.  
\begin{rmk}[Choice of $\omega_{\ell}$]\label{RmkOmega}
	For the sake of explanation, let Corollary~\ref{ThmDissi2} be applied to determine an upper bound on the $\mathcal{L}_2$-gain of a system on $\mathbb{X}\times\mathbb{U}$. The $\mathcal{L}_2$-gain corresponds to the minimal $\gamma> 0$ for the dissipativity with supply rate $s(x,u)=\gamma^2 u^Tu-x^Tx$. Given the solution of Corollary~\ref{ThmDissi2} for a partition of $\mathbb{X}$ and one TP, e.g., at the center of $\mathbb{X}$. Then we can use the obtained storage function to determine in which of the subsets the system property is most conservative, i.e., the subset with the largest $\gamma$. Thus, reducing the conservatism within this subset improves the inference on the system property. Hence, we determine a second TP with $\omega_{\ell}$ within the identified subset. This procedure can then be repeated. Note that this procedure circumvents to some extent the problem that verifying system properties by SOS optimization does not provide the state-input pair where \eqref{dissipativityInqu} is tight.
\end{rmk}

\subsection{$\mathcal{L}_2$-gain of a two-tank system}\label{SecExTank}

In our preliminary work \cite{MartinNL}, we infer for two numerical examples a guaranteed upper bound on the $\mathcal{L}_2$-gain. Contrary, we study here an experimental example.\\\indent
We analyze the $\mathcal{L}_2$-gain of the two-tank system in Fig.~\ref{Fig.2Tank}.
\begin{figure}
	\centering
	\includegraphics[width=0.49\linewidth]{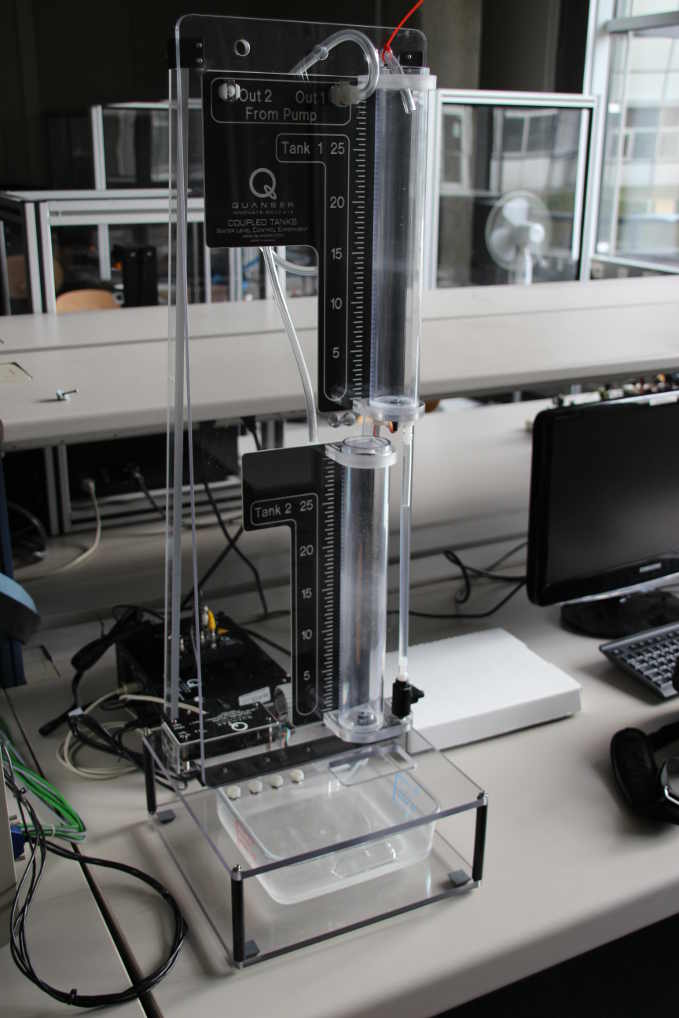}
	\includegraphics[width=0.49\linewidth]{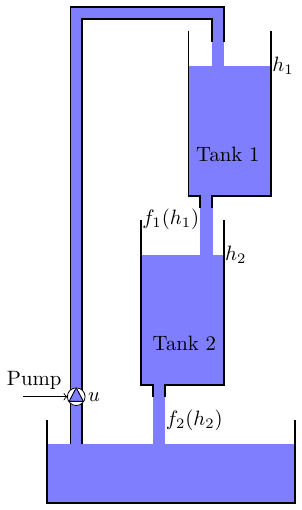}
	\caption{Experimental setup of the two-tank system.}
	\label{Fig.2Tank}
\end{figure}
In view of the experimental setup, it is clear that the input $u$ directly affects only the water height of the first tank $h_1$ regardless of $h_1$ and the water height of the second tank $h_2$. Moreover, the outflows $f_1$ and $f_2$ only depend on the corresponding water heights $h_1$ and $h_2$, respectively. The outflow of the first tank is equal to the inflow of the second tank. In terms of these observations, we suppose 
\begin{align*}
\dot{h}_1&=-f_1(h_1)+c_u u+d_1, h_1\in[0.05\,\text{m}, 0.25\,\text{m}], \\
\dot{h}_2&=f_1(h_1)-f_2(h_2)+d_2, h_2\in[0.05\,\text{m}, 0.25\,\text{m}]
\end{align*}
as prior knowledge on the structure of the system dynamics with unknown functions $f_1$ and $f_2$, unknown parameter $c_u$, and disturbances $d_1$ and $d_2$. We will exploit this knowledge as described in Section~\ref{Sec_Improv} to infer on the $\mathcal{L}_2$-gain of the two-tank system, i.e., we are interested in the minimal $\gamma> 0$ such that the system is dissipative with respect to the supply rate $\gamma^2 u^Tu-\begin{bmatrix}h_1 & h_2\end{bmatrix}\begin{bmatrix}h_1 & h_2\end{bmatrix}^T$.\\\indent
We measure the four input-state trajectories in Fig.~\ref{Fig.Experiment} from four open-loop experiments.
\begin{figure}
	\centering
	\includegraphics[width=1\linewidth]{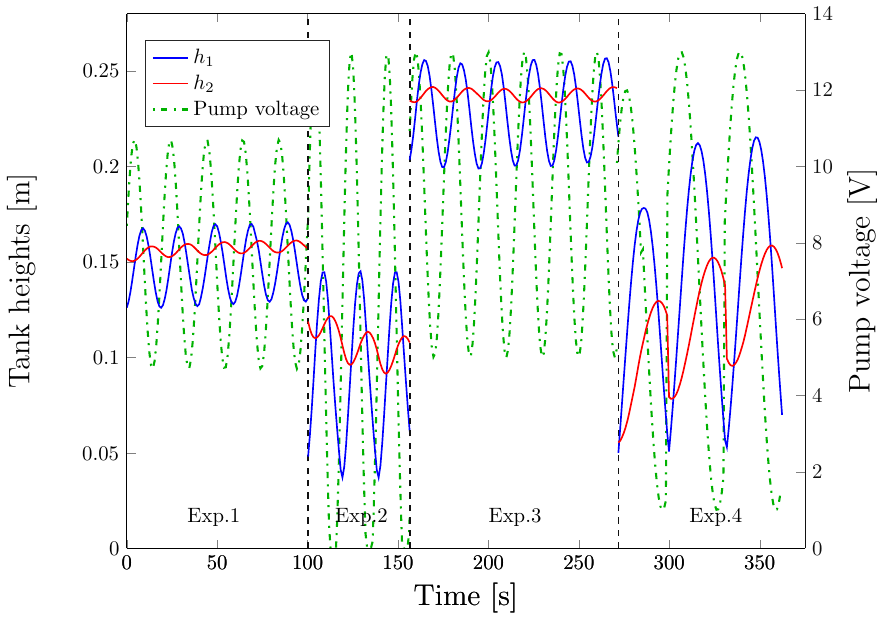}
	\caption{Input-state trajectories from four open-loop experiments.}
	\label{Fig.Experiment}
\end{figure}
Note that the samples outside of the operation set are omitted. While the sampling time of the sensors is $2\,\text{ms}$, the system exhibits a slower dynamics. Thus, to derive time sequences of $\dot{h}_1$ and $\dot{h}_2$, we first smooth the measured state signals by averaging over a window of 400 samples before and after each time point. Moreover, we use a triangular weighting, i.e., points in the middle have a high weighting while the weighting decreases linearly to zero towards the window edges. The smoothed signals are then leveraged to obtain the time signal of $\dot{h}_1$ and $\dot{h}_2$ by central finite difference with sixth order of accuracy. In contrast to \cite{AnneDissi}, we do not require an additional experiment to determine an equilibrium for linearizing the system dynamics as we do not need to set the $0$-th order elements of the TPs to zero. Fixing an operation point is also difficult as the measured equilibrium for the same input varies between experiments.\\\indent
To derive data-driven inferences on the $\mathcal{L}_2$-gain, we follow Algorithm~\ref{Algorithm2}. To compute $E(f)$, we extract $50$ input-state-velocity snapshots equally distributed over time from each of the smoothed trajectories of experiment 1 to 3. Moreover, we derive approximately independent validation data from a single trajectory as described in Section 2.B of \cite{IID_data}. For that purpose, we consider $1000$ snapshots from the trajectory of the fourth experiment equally distributed over time as validation data \eqref{DataSetVal}. Since the bounds on the $k+1$-th order partial derivatives 
$M_{1,[k+1\ 0\ 0]}=M_{2,[k+1\ 0\ 0]},$ and $ M_{2,[0\ k+1\ 0]}$ as well as noise bounds $\epsilon_i$ with $||d_i||_2\leq \epsilon_i,i=1,2,$ are obscure, we modify Algorithm~\ref{SVP}: for $M_{1,[k+1\ 0\ 0]}=M_{2,[k+1\ 0\ 0]}=M_{2,[0\ k+1\ 0]}=\beta_M\bar{M}_{k+1},\beta_M\in[0,1]$ and $\bar{M}_{k+1}$ given in Table~\ref{MBounds}, the minimal noise bounds $\epsilon_i, i=1,2,$ are chosen subject to $\tilde{\Sigma}_\omega$ is not empty and the resulting envelope $E(f)$ is consistent with all $1000$ validation samples, i.e., $\tilde{\mu}=0$. 
Hence, the risk that an other data point from the fourth experiment violates the data-driven envelopes is smaller than $0.04$ with confidence $0.9$ by Theorem~\ref{ThmProbabilisticAcc}. The risk only holds for samples from the fourth experiment as the validation data do not cover all possible state-input pairs. Nevertheless, this risk is meaningful as the validation data cover typical state-input trajectories.\\\indent
Based on the obtained bounds on the $k+1$-th order partial derivatives and the noise, we conclude by Theorem~\ref{ThmDissi} on the $\mathcal{L}_2$-gain for a quartic storage function, i.e., $m(h_1,h_2)=\begin{bmatrix}h_1 & h_2 & h_1^2 & h_2^2\end{bmatrix}^T$, and for TPs at $\omega=\begin{bmatrix}0.15 & 0.15 & 0\end{bmatrix}^T$ of order $k=1,2,3$.
\begin{figure}
	\centering
	\includegraphics[width=1\linewidth]{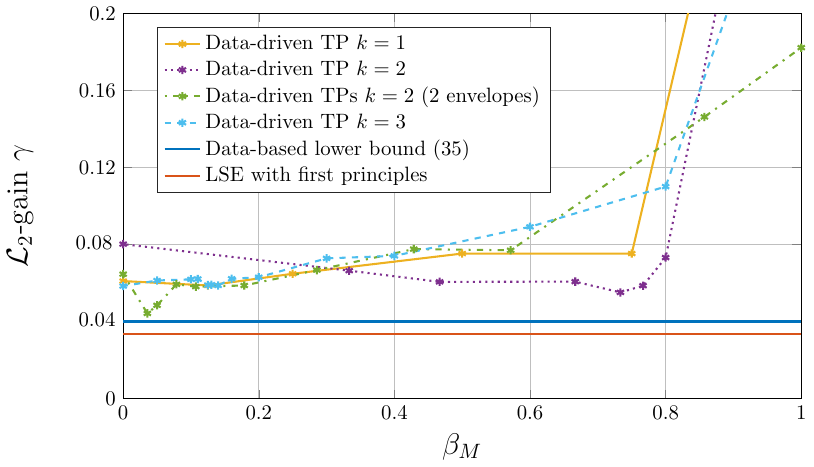}
	\caption{Inference by Theorem~\ref{ThmDissi} and Corollary~\ref{ThmDissi2} for TPs of order $k=1,2,3$, inference by LSE, and		data-driven lower bound \eqref{LowerBound}. }
	\label{Fig.L2GainTank}
\end{figure}
\begin{table}
	\begin{center}
		\caption{Maximal bounds on $k+1$-th order partial derivatives}\label{MBounds}
		\begin{tabular}{c||c|c|c|c}
			$k$ & 1 & 2 & 2 (2 envelops) & 3 \\\hline
			$\bar{M}_{k+1}$\phantom{\Big|} & 1 & 15 & 140 & 1000
		\end{tabular}\vspace{0.5cm}
	\end{center}
	\vspace{-0.9cm}
\end{table}
The obtained upper bounds on the $\mathcal{L}_2$-gain are depicted in Fig.~\ref{Fig.L2GainTank}.\\\indent 
While the second order TP results in the smallest upper bound, combining two TPs of order two at $\omega_1=\begin{bmatrix}0.1 & 0.1 & 0\end{bmatrix}^T$ and $\omega_2=\begin{bmatrix}0.2 & 0.2 & 0\end{bmatrix}^T$ with a common storage function attains an even smaller estimation. Fig.~\ref{Fig.L2GainTank} also indicates that the effect of conservative bounds $M_{i,\alpha}$ is reduced by $E'(f)$ compared to $E(f)$. Indeed, $\bar{M}_{k+1}$ is significantly larger for combined TPs than for a single TP of the same order while still resulting in a meaningful bound on the $\mathcal{L}_2$-gain. \\\indent
Fig.~\ref{Fig.L2GainTank} also demonstrates that the nonlinear system can be approximated by a set of polynomials, which corresponds to $\beta_M=0$. Thus, a precise knowledge on $M_{i,\alpha}$ is not crucial as also a set of polynomials can explain the data but might amount to a more conservative inference.\\\indent
It might be surprising that the envelope for a second order TP yields a lower estimation than a third order TP. However, TPs only guarantee a reduction of the approximation error close to $\omega$ for increasing $k$ and the width of the polynomial sector bound \eqref{Envelop1} grows with $||\left(x-\omega\right)^\alpha||_2,|\alpha|=k+1$. Therefore, we observe a refined performance of the first and third order TP for a larger and smaller operation set, respectively.\\\indent
\begin{table}
	\begin{center}
		\caption{Computation time for Algorithm 1 and Theorem~\ref{ThmDissi}/Corollary~\ref{ThmDissi2}}\label{Computation}
		\begin{tabular}{c||c|c|c|c}
			$k$ & 1 & 2 & 2 (2 envelops) & 3 \\\hline
			Algorithm \ref{SVP}\phantom{\Big|} & 3.5\,s & 3.6\,s & 7.1\,s & 4.5\,s\\ 
			Theorem~\ref{ThmDissi}/Corollary~\ref{ThmDissi2}\phantom{\Big|} & 4.8\,s & 5.9\,s & 45\,s & 8.0\,s
		\end{tabular}
	\end{center}\vspace{-0.5cm}
\end{table}
Table~\ref{Computation} shows the computation times for obtaining the bounds on the partial derivatives and the noise by Algorithm~\ref{SVP} and for solving the SOS optimization problem in Theorem~\ref{ThmDissi} and Corollary~\ref{ThmDissi2} on a Lenovo i5 notebook.\\\indent
We compare our framework to a least-squares-error (LSE) model. To this end, we suppose that $f_i(h_i)=-c_i\sqrt{h_i}$ by first principles \cite{Quanser} and estimate the unknown coefficients $c_u$ and $c_i,i=1,2,$ from the available data using LSE estimation. To infer on the $\mathcal{L}_2$-gain of this nonlinear model, we retrieve the polynomial sector \eqref{PolySecBound} of the square root function by its third order TP and choosing $M_{1,[k+1\ 0\ 0]}=M_{2,[k+1\ 0\ 0]}$ and $M_{2,[0\ k+1\ 0]}$ such that the resulting polynomial sector bound contains the square root function for $h_{1},h_2\in[0.05, 0.25]$. Fig.~\ref{Fig.L2GainTank} illustrates that the calculated gain is smaller than the inferences from the data-driven TPs.\\\indent
To validate the $\mathcal{L}_2$-gain estimations from the LSE model and the data-driven TPs, we unfortunately do not know the actual $\mathcal{L}_2$-gain of the two-tank system due to the unknown system dynamics. Therefore, we calculate from the smoothed input-state trajectories the lower bound on the $\mathcal{L}_2$-gain
\begin{equation}\label{LowerBound}
\max_{\substack{\tau=4,\dots,149,\\ i=1,\dots,150-\tau}}\frac{\sqrt{\sum_{k=i}^{i+\tau}||\begin{bmatrix}\tilde{h}_{1k} & \tilde{h}_{2k}\end{bmatrix}^T||_2^2}}{\sqrt{\sum_{k=i}^{i+\tau}||\tilde{u}_k||_2^2}}.
\end{equation} 
Note that $\tau=1,2,3$ are neglected to average samples that lead to unrealistically large $\mathcal{L}_2$-gains due to the time delay of the inflow through the pump. Fig.~\ref{Fig.L2GainTank} clarifies that \eqref{LowerBound} is smaller than the obtained $\mathcal{L}_2$-gain inferences from the data-driven TPs, while the inference from the LSE model violates the lower bound \eqref{LowerBound}. This is the case even though the latter includes knowledge from first principles and is robust with respect to the nonlinearity due to the polynomial sector bound of the square root function. Therefore, we assess our approach better than the LSE estimation. Nevertheless, we cannot guarantee that the estimations from the data-driven TPs actually upper bound the true $\mathcal{L}_2$-gain as a guaranteed inference on the dynamics from only noisy data with unknown noise characterization is not possible.

\begin{rmk}[Learning continuous-time models]
	An alternative to obtain $E(f)$ for the continuous-time system \eqref{TrueSystem} by means of time derivatives of smoothed signals is to conclude on an envelope for a discrete-time representation directly from the measured input-state data. Then dissipativity of the continuous-time system could be analyzed by the continuous-time version of the discrete-time model under, e.g., Euler time-discretization. However, \cite{Filtervs} shows that this approach exhibits a deteriorated performance.
\end{rmk}  
\begin{rmk}[Numerical problems]
	We observed  numerical problems for the SOS optimization problem of Theorem~\ref{ThmDissi} and Corollary~\ref{ThmDissi2} if the state space $\mathbb{X}$ is not contained in the unit ball. Therefore, the water heights are given in meter.
\end{rmk}

\section{Data-driven incremental dissipativity verification}\label{SecIncr}

In this section, we apply the data-based representation for nonlinear functions from Section~\ref{SecNPModel} to verify incremental dissipativity for the unknown nonlinear continuous-time system \eqref{TrueSystem} operated within the compact set \eqref{Constraints} by the noisy measurements \eqref{DataSys}. While this pursues similarly to Section~\ref{SecDissi}, it shows that our TP representation is not limited to verify dissipativity but can easily be adapted to a wide range of problems in control theory.\\\indent
First, we recap and adapt results on incremental dissipativity from \cite{IncrementalRoland}.

\begin{defn}[Incremental $(Q,S,R)$-dissipativity]\label{DefIncDissi}
	System~\eqref{TrueSystem} is incremental dissipative on $\mathbb{P}$ \eqref{Constraints} with respect to the quadratic supply rate 
	\begin{equation}\label{IncSupply}
	s(x_1-x_2,u_1-u_2)=\begin{bmatrix} x_1-x_2\\u_1-u_2\end{bmatrix}^T\begin{bmatrix}Q & S\\ S^T & R\end{bmatrix}\begin{bmatrix} x_1-x_2\\ u_1-u_2\end{bmatrix}
	\end{equation}
	with $Q=Q^T\preceq0$ and $R=R^T$ if there exists a storage function $\lambda(x_1,x_2)=(x_1-x_2)^T\mathcal{M}(x_1,x_2)(x_1-x_2)$ with matrix $\mathcal{M}(x_1,x_2)$ being symmetric, differentiable, bounded, and positive definite for all $x_1,x_2\in\mathbb{X}$ such that
	\begin{align*}\label{IncdissipativityInqu}
	\dot{\lambda}(x_1,x_2)\leq s(x_1-x_2,u_1-u_2) 
	\end{align*}
	for all $(x_1,u_1),(x_2,u_2)\in{\mathbb{P}}$.
\end{defn}

Incremental dissipativity constitutes an extension of the notion of dissipativity in Definition~\ref{DissiDef} as dissipativity between two arbitrary trajectories is investigated. Therefore, incremental dissipativity implies dissipativity but not vice versa in general. Hence, incremental system properties establish a stronger condition on the system dynamics, and thus a less conservative controller design. Indeed, the set of systems satisfying a certain incremental property is smaller than the set of systems for the corresponding non-incremental property. Since the controller has to deal with all systems of these sets by design, the smaller set of systems for incremental properties leads to less restrictions for the controller design, and thus to a better closed-loop performance.\\\indent 
Incremental dissipativity is studied in \cite{IncrementalRoland} by the differential form of system \eqref{TrueSystem}  
\begin{equation}\label{DiffSys}
\delta \dot{x}=\frac{\partial f(\bar{x},\bar{u})}{\partial \xi}\delta \xi
\end{equation}
with $\xi=\begin{bmatrix}x^T&u^T\end{bmatrix}^T$ and $\delta\xi=\begin{bmatrix}\delta x^T&\delta u^T\end{bmatrix}^T$. Here, solutions $(\delta x(t),\delta u(t)), t\geq0,$ of the variational system \eqref{DiffSys} describe the time evolution of infinitesimal perturbations along the trajectory $(\bar{x}(t),\bar{u}(t)), t\geq0,$ of \eqref{TrueSystem}, which is described more precisely in Remark 1 of \cite{DifferentialSys}. With the introduction of the variational system, we proceed with the notion of differential dissipativity. 

\begin{defn}[Differential $(Q,S,R)$-dissipativity]\label{DefDiffDissi}
	System~\eqref{TrueSystem} is differential dissipative on $\mathbb{P}$ \eqref{Constraints} with respect to the quadratic supply rate 
	\begin{equation}\label{DiffdissipativityInqu}
	s(\delta x,\delta u)=\begin{bmatrix}\delta x\\\delta u\end{bmatrix}^T\begin{bmatrix}Q & S\\ S^T & R\end{bmatrix}\begin{bmatrix} \delta x\\\delta u\end{bmatrix}
	\end{equation}
	with $Q=Q^T\preceq0$ and $R=R^T$ if there exists a storage function $\lambda(x,\delta x)=\delta x^T\mathcal{M}(x)\delta x$ with matrix $\mathcal{M}(x)$ being symmetric, differentiable, bounded, and positive definite for all $x\in\mathbb{X}$ such that
	\begin{align}\label{DifdissipativityInqu}
	\dot{\lambda}(x,\delta x)\leq s(\delta x,\delta u) 
	\end{align}
	for all $(x,u)\in{\mathbb{P}}$ and all $(\delta x,\delta u)\in\mathbb{R}^{n_x}\times\mathbb{R}^{n_u}$.
\end{defn}

While we refer to \cite{IncrementalRoland} for an interpretation of differential dissipativity, we are interested in verifying incremental dissipativity by the subsequent connection of incremental and differential dissipativity.
\begin{thm}[Theorem 13 \cite{IncrementalRoland}]\label{ThmIncDiffDissi}
	System~\eqref{TrueSystem} is $(Q,S,R)$-incremental dissipative on $\mathbb{P}$ with respect to the incremental supply rate \eqref{IncSupply} if the system is $(Q,S,R)$-differential dissipative on $\mathbb{P}$ with respect to the differential supply rate \eqref{DiffdissipativityInqu}.	
\end{thm}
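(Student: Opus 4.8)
The plan is to recover the integrated incremental dissipation inequality of Definition~\ref{DefIncDissi} by integrating the pointwise (in a homotopy parameter $s$) differential dissipation inequality \eqref{DifdissipativityInqu} along a path connecting the two trajectories, as in the proof of Theorem~13 in \cite{IncrementalRoland}. First I would fix two trajectories $(x_1(\cdot),u_1(\cdot))$ and $(x_2(\cdot),u_2(\cdot))$ of \eqref{TrueSystem} staying in $\mathbb{P}$ and embed them into a one-parameter family of system trajectories $\chi(s,t)$, $s\in[0,1]$, driven by the affinely interpolated inputs $\mu(s,t)=(1-s)u_2(t)+s\,u_1(t)$ with initial condition $\chi(s,0)=(1-s)x_2(0)+s\,x_1(0)$, so that $s=0$ and $s=1$ reproduce the two given trajectories. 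Smooth dependence of ODE solutions on $s$ then guarantees that the sensitivity $\delta\chi:=\partial_s\chi$ is well defined and, by differentiating $\dot\chi=f(\chi,\mu)$ with respect to $s$, solves the variational system \eqref{DiffSys} along $(\chi(s,\cdot),\mu(s,\cdot))$ with differential input $\delta\mu:=\partial_s\mu=u_1-u_2$.

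Given differential dissipativity, \eqref{DifdissipativityInqu} applies for each frozen $s$ with $\delta x=\delta\chi(s,t)$ and $\delta u=\delta\mu$, i.e.\ $\tfrac{d}{dt}[\delta\chi^T\mathcal{M}(\chi)\delta\chi]\le s(\delta\chi,\delta\mu)$. I would then take the incremental storage to be the path integral of the differential storage along the homotopy, $\lambda=\int_0^1\delta\chi^T\mathcal{M}(\chi)\delta\chi\,ds$, integrate the inequality over $s\in[0,1]$, and interchange $\tfrac{d}{dt}$ with $\int_0^1(\cdot)\,ds$ to obtain $\dot\lambda\le\int_0^1 s(\delta\chi(s),\delta\mu)\,ds$. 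Since $\lambda$ vanishes on the diagonal $x_1=x_2$ together with its gradient, Hadamard's lemma (Taylor with integral remainder) lets me write it in the form required by Definition~\ref{DefIncDissi}, $\lambda=(x_1-x_2)^T\mathcal{M}(x_1,x_2)(x_1-x_2)$, with a symmetric, bounded, positive definite $\mathcal{M}(x_1,x_2)$ inherited from $\mathcal{M}(x)\succ0$.

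The crux is bounding the integrated supply by the incremental supply \eqref{IncSupply}. Because $\delta\mu=u_1-u_2$ is constant in $s$ and $\int_0^1\delta\chi(s)\,ds=\chi(1,t)-\chi(0,t)=x_1-x_2$ by the fundamental theorem of calculus, expanding the quadratic form shows that the cross term $2(x_1-x_2)^TS(u_1-u_2)$ and the input term $(u_1-u_2)^TR(u_1-u_2)$ already assemble exactly into those of \eqref{IncSupply}, leaving only the state term $\int_0^1\delta\chi^TQ\,\delta\chi\,ds$ to control. Here the hypothesis $Q=Q^T\preceq0$ is precisely what is needed: convexity of $v\mapsto v^T(-Q)v$ and Jensen's inequality give $\int_0^1\delta\chi^TQ\,\delta\chi\,ds\le(x_1-x_2)^TQ(x_1-x_2)$, so that $\int_0^1 s(\delta\chi,\delta\mu)\,ds\le s(x_1-x_2,u_1-u_2)$ and hence $\dot\lambda\le s(x_1-x_2,u_1-u_2)$, which is the incremental dissipation inequality.

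I expect the main obstacle to be the global/regularity issues underlying this otherwise clean chain: establishing that the interpolating family $\chi(s,t)$ exists, depends smoothly on $s$, and---crucially---that $(\chi(s,t),\mu(s,t))\in\mathbb{P}$ for \emph{all} $s\in[0,1]$, which in general requires $\mathbb{P}$ to be convex (or geodesically convex with respect to the metric $\mathcal{M}$) so that differential dissipativity is applicable along the entire homotopy. Under such a convexity assumption the interchange of differentiation and integration and the boundedness of $\mathcal{M}$ are routine, and the whole argument reduces to the Jensen step enabled by $Q\preceq0$.
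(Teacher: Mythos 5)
You should first note that the paper itself contains no proof of this statement: it is quoted verbatim as Theorem~13 of \cite{IncrementalRoland} and used as an imported result, so your proposal can only be compared against the cited reference's argument, which it reconstructs quite faithfully. The homotopy of trajectories driven by affinely interpolated inputs and initial conditions, integration of \eqref{DifdissipativityInqu} over the homotopy parameter, the fundamental-theorem-of-calculus identity $\int_0^1\partial_s\chi\,ds=x_1-x_2$ assembling the $S$- and $R$-terms exactly, and the Jensen step enabled by $Q\preceq0$ (which is indeed the entire reason for that sign assumption) are the correct core of that proof. Your caveat that $(\chi(s,t),\mu(s,t))\in\mathbb{P}$ must hold for all $s\in[0,1]$, requiring convexity of the state and input sets, is also genuine: revised versions of \cite{IncrementalRoland} add precisely such assumptions.

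There is, however, one genuine gap in your construction of the storage function. The quantity $\lambda=\int_0^1\delta\chi^T\mathcal{M}(\chi)\,\delta\chi\,ds$ is a functional of the trajectory pair, not a function of the current state pair: for $t>0$ the path $s\mapsto\chi(s,t)$ is the flow image of the straight line of \emph{initial} conditions and is not determined by $(x_1(t),x_2(t))$ alone, so two trajectory pairs passing through the same state pair at time $t$ generally yield different values of $\lambda$. Consequently your Hadamard's-lemma step --- rewriting $\lambda$ as $(x_1-x_2)^T\mathcal{M}(x_1,x_2)(x_1-x_2)$ as Definition~\ref{DefIncDissi} requires --- is not available as stated, because there is no well-defined function of $(x_1,x_2)$ to apply it to. The standard repair is to take as storage the Riemannian energy $E(x_1,x_2)=\inf_{\gamma}\int_0^1\partial_s\gamma^T\mathcal{M}(\gamma)\,\partial_s\gamma\,ds$ over paths $\gamma$ joining $x_1$ and $x_2$, initialize the homotopy with a minimizing geodesic so that $\lambda$ at $t=0$ equals $E(x_1(0),x_2(0))$, and use $E(x_1(T),x_2(T))\leq\lambda(T)$ (the energy being an infimum over paths) to transfer your integrated inequality to $E$; uniform bounds $\underline{m}I\preceq\mathcal{M}(x)\preceq\bar{m}I$ and geodesic convexity then permit writing $E$ in the required quadratic form. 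Without this step your argument establishes a dissipation inequality along the specific homotopy, which is weaker than incremental dissipativity in the sense of Definition~\ref{DefIncDissi}; with it, your proof matches the corrected argument of the cited reference.
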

\begin{figure*}
	\setcounter{MYtempeqncnt}{\value{equation}}
	\setcounter{equation}{44}
	\begin{align}\label{SOSCondIncDissi}
	&\tilde{\Psi}(x,u)=\tilde{\Upsilon}^T
	\text{diag}\left(-\begin{bmatrix}0 & \mathcal{M}(x)\\ \mathcal{M}(x) & 0\end{bmatrix}\Bigg|-\frac{1}{2}\begin{bmatrix}0 & \left[\frac{\partial \mathcal{M}}{\partial x_1}\cdots\frac{\partial \mathcal{M}}{\partial x_{n_x}}\right]^T\\ \frac{\partial \mathcal{M}}{\partial x_1}\cdots\frac{\partial \mathcal{M}}{\partial x_{n_x}} & 0\end{bmatrix}\Bigg| \begin{bmatrix}Q & S\\ S^T & R\end{bmatrix}\Bigg|\sum_{i=1}^{n_p}T_i(x,u)p_i(x,u)\bigg|\dots\right.\notag\\
	&\hspace{1.45cm}\left.\tau_{\text{sec},0}(x,u)\tilde{\Phi}(\omega)\bigg|\tau_{\text{sm},0}(x,u)\varDelta_*\bigg|\tau_{\text{sec},1}(x,u)\Phi(\omega)\bigg|\cdots\bigg|\tau_{\text{sec},n_x}(x,u)\Phi(\omega)\bigg| \tau_{\text{sm},1}(x,u)\varDelta_*\bigg| \cdots\bigg| \tau_{\text{sm},n_x}(x,u)\varDelta_*
	\right)\tilde{\Upsilon}
	\end{align}
	\hrulefill
	\setcounter{equation}{\value{MYtempeqncnt}}
\end{figure*}

By Theorem~\ref{ThmIncDiffDissi}, we can verify incremental dissipativity by checking condition ~\eqref{DifdissipativityInqu}. For that purpose, elaborating the left-hand side of \eqref{DifdissipativityInqu} with the dynamics of the variational system \eqref{DiffSys} yields
\begin{align}\label{TimeDerivative}
\dot{\lambda}(x,\delta x)=&2 \delta x^T\mathcal{M}(x)\frac{\partial f({x},{u})}{\partial \xi} \delta\xi\notag\\
&+\delta x^T\left(\sum_{i=1}^{n_x}\frac{\partial \mathcal{M}({x})}{\partial x_i}f_i(x,u)\right)\delta x.	
\end{align}
Since $f$ and its first order partial derivative $\frac{\partial f({x},{u})}{\partial \xi}$ are unknown, we require a data-based envelope for $\frac{\partial f({x},{u})}{\partial \xi}\delta\xi$ and $\delta x_if(x,u),i=1,\dots,n_x$. For the latter, multiplying inequality \eqref{PolySecBound} for a TP $T_k(\omega)[f(x,u)]=A^*z(x,u)$ with $\delta x_i^2$ yields the envelope $\bar{E}(\delta x_if)$ equal to 
\begin{align}\label{Envelopdeltaf}
\bigcup_{A\in\tilde{\Sigma}_{\omega}}\left\{\left(\begin{bmatrix}\xi \\ \delta x_i\end{bmatrix},y\right)\in\mathbb{P}{\times}\mathbb{R}{\times}\mathbb{R}^{n_x}:\delta x_i^2p_\text{sec}\left(\xi,\frac{y}{\delta x_i},A\right){\leq}0 \right\}
\end{align}
which contains the graph of $\delta x_if$. To recover an envelope for $\frac{\partial f({x},{u})}{\partial \xi}\delta\xi$, we write the first order partial derivatives by Taylor's theorem as $\frac{\partial f_i({x},{u})}{\partial \xi_j}(x)=T_{k-1}(\omega)\left[\frac{\partial f_i({x},{u})}{\partial \xi_j}\right]+R_{k-1}(\omega)\left[\frac{\partial f_i({x},{u})}{\partial \xi_j}\right]$ with
\begin{align*}
T_{k-1}(\omega)\left[\frac{\partial f_i({x},{u})}{\partial \xi_j}\right]&=\sum_{|\alpha|=0}^{k-1}\frac{\partial^{|\alpha|+1}f_i(\omega)}{\partial \xi^{\alpha+e_j}}\frac{(\xi-\omega)^{\alpha}}{\alpha!}\\
&={a_i^*}^T\frac{\partial z(x,u)}{\partial \xi_j}
\end{align*}
and the Lagrange remainder
\begin{align*}
R_{k-1}(\omega)\left[\frac{\partial f_i({x},{u})}{\partial \xi_j}\right]{=}\sum_{|\alpha|=k}\frac{\partial^{k+1}f_i(\omega{+}\nu(\xi{-}\omega))}{\partial \xi^{\alpha+e_j}}\frac{\left(\xi-\omega\right)^{\alpha}}{\alpha !}
\end{align*}
which square can be upper bounded under Assumption~\ref{AssBoundDeri} by 
\begin{align}\label{BoundSquare}
&R^{\text{poly}}_{k-1}(\omega)\left[\frac{\partial f_i({x},{u})}{\partial \xi_j}\right]=\sum_{|\alpha|=k}\kappa_iM_{i,\alpha+e_j}^2\frac{\left(\xi-\omega\right)^{2\alpha}}{\alpha !^2}	
\end{align}
with $\kappa_i\in\mathbb{N}$ equal to the number of $M_{i,\alpha+e_j}\neq0$ with $|\alpha|=k$ by Lemma~\ref{LemBound}. Note that ${a_i^*}^T$ is the $i$-th row of the coefficient matrix ${A^*}$ and $z(x,u)$ is the vector of polynomials of TP $T_k(\omega)[f(x,u)]=A^*z(x,u)$. Summarizing,
\begin{equation*}
\frac{\partial f({x},{u})}{\partial \xi}\delta\xi=A^*\frac{\partial z(x,u)}{\partial \xi}\delta\xi+R_{k-1}(\omega)\left[\frac{\partial f({x},{u})}{\partial \xi}\right]\delta\xi
\end{equation*}
where the $(i,j)$-th element of $R_{k-1}(\omega)\left[\frac{\partial f({x},{u})}{\partial \xi}\right]$ is equal to $R_{k-1}(\omega)\left[\frac{\partial f_i({x},{u})}{\partial x_j}\right]$. With $R_j$ denoting the $j$-th column of $R_{k-1}(\omega)\left[\frac{\partial f({x},{u})}{\partial \xi}\right]$, we finally derive the polynomial sector bound
\begin{align*}
&\Bigg|\Bigg|\frac{\partial f({x},{u})}{\partial \xi}\delta\xi-A^*\frac{\partial z(x,u)}{\partial \xi}\delta\xi\Bigg|\Bigg|_2^2=\Bigg|\Bigg|\sum_{j=1}^{n_x+n_u}\delta\xi_jR_j\Bigg|\Bigg|_2^2\\
=&\sum_{i=1}^{n_x}\left(\sum_{j=1}^{n_x+n_u}\delta\xi_j R_{k-1}(\omega)\left[\frac{\partial f_i({x},{u})}{\partial \xi_j}\right] \right)^2\\
\leq&\sum_{i=1}^{n_x}\sum_{j=1}^{n_x+n_u}\pi_i\delta\xi_j^2 R^\text{poly}_{k-1}(\omega)\left[\frac{\partial f_i({x},{u})}{\partial \xi_j}\right]\\
=&\delta\xi^TR^{\text{poly}}_{k-1}(\omega)\delta\xi
\end{align*}
where $R^{\text{poly}}_{k-1}(\omega)=\sum_{i=1}^{n_x}\pi_i\,\text{diag}\left(R^{\text{poly}}_{k-1}(\omega)\left[\frac{\partial f_i({x},{u})}{\partial \xi_1}\right]\Big|\dots\Big|\right.$ $\left. R^{\text{poly}}_{k-1}(\omega)\left[\frac{\partial f_i({x},{u})}{\partial \xi_{n_x+n_u}}\right]\right)$ and $\pi_i,i=1,\dots,n_x,$ is equal to the number of $R^\text{poly}_{k-1}(\omega)\left[\frac{\partial f_i({x},{u})}{\partial \xi_j}\right]\neq0$ for $j=1,\dots,n_x+n_u$. The inequality follows from \eqref{BoundSquare} and the fact that $2vw\leq v^2+w^2,\forall v,w\in\mathbb{R}$, as exploited in Lemma~\ref{LemBound}. This yields the envelope 
\begin{align}\label{EnvelopPartialdel}
&\tilde{E}\left(\frac{\partial f({x},{u})}{\partial \xi}\delta\xi\right){=}\bigcup_{A\in\tilde{\Sigma}_{\omega}}\left\{\left(\begin{bmatrix}\xi \\ \delta\xi\end{bmatrix},y\right){\in}\mathbb{P}\times\mathbb{R}^{n_x+n_u}\times\mathbb{R}^{n_x}{:}\right.\notag\\
&\hspace{3cm}\left.\tilde{p}_\text{sec}\left(\begin{bmatrix}\xi \\ \delta\xi\end{bmatrix},y,A\right)\leq0 \right\}
\end{align}
with 
\begin{align*}
\tilde{p}_\text{sec}&=\star^T
\tilde{\Phi}(\omega)\cdot\begin{bmatrix}\begin{array}{c}\begin{matrix}
I_{n_x} & -I_{n_x} & 0 \\\hline 0 & 0 & I_{n_x+n_u}\end{matrix}\end{array}\end{bmatrix}
\begin{bmatrix}y\\ A\frac{\partial z(x,u)}{\partial \xi}\delta\xi\\\delta\xi	\end{bmatrix},\\
\tilde{\Phi}(\omega)&=\text{diag}\left(I_{n_x}\big| -R^{\text{poly}}_{k-1}(\omega)\right).
\end{align*}
In \eqref{EnvelopPartialdel}, we exploit the same set-membership $\tilde{\Sigma}_{\omega}$ for the TP $T_k(\omega)[f(x,u)]=A^*z(x,u)$ as in Theorem \ref{ThmDissi} because the TPs of the first order partial derivatives are also explained by the coefficient matrix $A^*$. Therefore, we can consider input-state-velocity data \eqref{DataSys} for verifying differential dissipativity as in the non-incremental case.

\begin{coro}[Data-driven increm. dissipativity verification]\label{CoroIncDissi}
	Let Assumption~\ref{AssBoundDeri} and \ref{AssNoiseBound} hold. At the same time, let the matrix $\tilde{Z}=\begin{bmatrix}z(\tilde{x}_1,\tilde{u}_1) & \cdots & z(\tilde{x}_S,\tilde{u}_S)\end{bmatrix}$ has full row rank for the TP $T_k(\omega)[f(x,u)]=A^*z(x,u)$ at $\omega=[\omega_x^T\ \omega_u^T]^T$. Then the set-membership 
	\begin{equation}\label{SetMem}
	\tilde{\Sigma}_{\omega}=\left\{A:\begin{bmatrix}I_{n_{z}}\\A\end{bmatrix}^T\varDelta_{*}\begin{bmatrix}I_{n_{z}}\\A\end{bmatrix}\preceq0\right\}
	\end{equation}
	for $A^*$ exist. Moreover, system~\eqref{TrueSystem} is incremental $(Q,S,R)$-dissipative on \eqref{Constraints} if there exist a symmetric matrix $\mathcal{M}(x)\in\text{SOS}[x]^{n_x\times n_x}$, matrices $T_i\in\text{SOS}[x,u]^{n_x+n_u},i=1,\dots,n_p$, and polynomials $\tau_{\text{sec},i},\tau_{\text{sm},i}\in\text{SOS}[x,u],i=0,\dots,n_x,$ such that $\tilde{\Psi}(x,u)$ in \eqref{SOSCondIncDissi} becomes an SOS matrix with $\tilde{\Upsilon}$ equal to
	\begin{equation*}
	\begin{bmatrix}\hspace{0.1cm}\begin{matrix}	 I_{n_x} & 0 & 0 & 0 & 0 \\ 0 & 0 & 0 & 0 & [I_{n_x} \ 0]\\	\hline 
	0 & I_{n_x^2} & 0 & 0 & 0 \\ 0 & 0 & 0 & 0 & [I_{n_x} \ 0]\\	\hline 	
	0 & 0 & 0 & 0 & I_{n_x+n_u}\\\hline 
	0 & 0 & 0 & 0 & I_{n_x+n_u}\\\hline 
	I_{n_x} & 0 & -I_{n_x} & 0 & 0  \\  0 & 0 & 0 & 0 & I_{n_x+n_u}\\\hline 
	0 & 0 & 0 & 0 & \frac{\partial z}{\partial \xi}\\ 0 & 0 & I_{n_x} & 0 & 0\\\hline
	0 & \text{diag}(e_1^T|\cdots|e_1^T)\phantom{\Big|}  & 0 & -[I_{n_x} \cdots 0]& 0  \\  0 & 0 & 0 & 0 & e_1^T\\\hline 
	\vdots & \vdots & \vdots & \vdots & \vdots \\\hline
	0 & \text{diag}(e_{n_x}^T|\cdots|e_{n_x}^T)\phantom{\Big|}  & 0 & -[0\cdots  I_{n_x}] & 0  \\  0 & 0 & 0 & 0 & e_{n_x}^T\\\hline
	0 & 0 & 0 & 0 & \phantom{\Big|}ze_1^T\\ 0 & 0 & 0 & [I_{n_x}   \cdots0] & 0\\\hline
	\vdots & \vdots & \vdots & \vdots & \vdots \\\hline
	0 & 0 & 0 & 0 & \phantom{\Big|}ze_{n_x}^T\\ 0 & 0 & 0 & [0\cdots I_{n_x}] & 0
	\end{matrix} \hspace{0.07cm}\end{bmatrix}\hspace{-0.11cm}.	
	\end{equation*}
	
	\setcounter{equation}{45}
\end{coro}
\begin{proof}
	Proposition~\ref{Primal_elli} implies immediately the existence of $\tilde{\Sigma}_{\omega}$. Furthermore, we pursue the proof of Theorem~\ref{ThmDissi}. Thus, $0\leq\sigma^T\tilde{\Psi}\sigma$ with 
	\begin{align*}
	\sigma=&\left[\begin{matrix}\left(\frac{\partial f}{\partial \xi} \delta\xi\right)^T &  \left[f_1\delta x^T \ \cdots \ f_{n_x}\delta x^T\right] &(A\frac{\partial z}{\partial \xi}\delta\xi)^T\end{matrix}\right.\dots\\
	&\left.\begin{matrix} & \left[\delta x_1(Az)^T \ \cdots \ \delta x_{n_x}(Az)^T\right] & \delta\xi^T\end{matrix} \phantom{\Big|^!}\right]^T
	\end{align*}
	and
	\begin{align*}
	\sigma^T\tilde{\Psi}\sigma=&\, -2 \delta x^T\mathcal{M}\frac{\partial f}{\partial \xi} \delta\xi-\delta x^T\left(\sum_{i=1}^{n_x}\frac{\partial \mathcal{M}}{\partial x_i}f_i\right)\delta x\\
	&+\delta \xi^T\begin{bmatrix}
	Q & S\\ S^T& R\end{bmatrix}\delta \xi+\sum_{i=1}^{n_P}\delta \xi^TT_i\delta \xi\ p_i\\	
	&+\tau_{\text{sec},0}\star^T
	\tilde{\Phi} \cdot\begin{bmatrix}\begin{array}{c}\begin{matrix}
	I_{n_x} & -I_{n_x} & 0 \\\hline 0 & 0 & I_{n_x+n_u}\end{matrix}\end{array}\end{bmatrix}\begin{bmatrix}\frac{\partial f}{\partial \xi} \delta\xi\\ A\frac{\partial z}{\partial \xi}\delta\xi\\\delta\xi	\end{bmatrix}\\
	&+\tau_{\text{sm},0}\star^T
	\varDelta_*\cdot
	\begin{bmatrix}I_{n_z}\\ A	\end{bmatrix}\frac{\partial z}{\partial \xi}\delta\xi\\
	&+\sum_{i=1}^{n_x}\tau_{\text{sec},i}\star^T
	{\Phi} \cdot\begin{bmatrix}\begin{array}{c}\begin{matrix}
	I_{n_x} & -I_{n_x} & 0 \\\hline 0 & 0 & 1\end{matrix}\end{array}\end{bmatrix}\begin{bmatrix}f\\ Az\\1	\end{bmatrix}\delta x_i\\
	&+\sum_{i=1}^{n_x}\tau_{\text{sm},i}\star^T
	\varDelta_*\cdot
	\begin{bmatrix}I_{n_z}\\ A	\end{bmatrix}\delta x_iz.
	\end{align*}
	Since $p_i(x,u)\leq0$ by \eqref{Constraints}, $\tilde{p}_\text{sec}\left(\begin{bmatrix}\xi \\ \delta\xi\end{bmatrix},y,A\right)\leq0$ by \eqref{EnvelopPartialdel}, $\delta x_i^2p_\text{sec}(\xi,\frac{y}{\delta x_i},A)\leq0$ by \eqref{Envelopdeltaf}, and $\star^T\varDelta_*\begin{bmatrix}I_{n_z}& A^T	\end{bmatrix}^T\preceq0$ by \eqref{SetMem}, the generalized S-procedure of Proposition~\ref{SOSRelaxation} implies 
	\begin{align}\label{IncDissiInqu}
	0\leq&\, -2 \delta x^T\mathcal{M}\frac{\partial f}{\partial \xi} \delta\xi-\delta x^T\left(\sum_{i=1}^{n_x}\frac{\partial \mathcal{M}}{\partial x_i}f_i\right)\delta x\notag\\
	&+\delta \xi^T\begin{bmatrix}
	Q & S\\ S^T& R\end{bmatrix}\delta \xi
	\end{align}
	for all $\left(\begin{bmatrix}\xi \\ \delta x_i\end{bmatrix},\delta x_if\right)\in \bar{E}(\delta x_if),i=1,\dots, n_x$, all $\left(\begin{bmatrix}\xi \\ \delta \xi\end{bmatrix},\frac{\partial f}{\partial \xi}\delta\xi\right)\in \tilde{E}\left(\frac{\partial f}{\partial \xi}\delta\xi\right)$, and all $\delta\xi\in\mathbb{R}^{n_x+n_u}$. Given that \eqref{IncDissiInqu} corresponds to \eqref{DifdissipativityInqu}  with \eqref{TimeDerivative} and the graphs of $\delta x_if$ and $\frac{\partial f}{\partial \xi}\delta\xi$ are contained in the envelopes $\bar{E}(\delta x_if)$ and $\tilde{E}\left(\frac{\partial f}{\partial \xi}\delta\xi\right)$, respectively, system~\eqref{TrueSystem} is $(Q,S,R)$-differential dissipative on $\mathbb{P}$, and therefore $(Q,S,R)$-incremental dissipative on $\mathbb{P}$ by Theorem~\ref{ThmIncDiffDissi}.
\end{proof}

Related to Theorem~\ref{ThmDissi}, the data-driven verification of incremental dissipativity relies on the knowledge of data-based envelopes that contain the unknown nonlinear system dynamics \eqref{Envelopdeltaf} and its Jacobian matrix \eqref{EnvelopPartialdel}. Since both envelopes require $\tilde{\Sigma}_\omega$, we can follow Algorithm~\ref{Algorithm2} to analyze incremental dissipativity but 
checking the SOS criterion for \eqref{SOSCondIncDissi} in step 3.). According to the discussion in Section~\ref{Sec_Improv}, we can reduce the conservatism of Corollary~\ref{CoroIncDissi} by incorporating prior knowledge on the structure of the dynamics. Furthermore, the combination of multiple envelopes from TPs around distinct points as in Corollary~\ref{ThmDissi2} is conceivable.

\subsection{Numerical example for incremental $\mathcal{L}_2$-gain}\label{SecEx3}

We apply Corollary~\ref{CoroIncDissi} to calculate the incremental $\mathcal{L}_2$-gain which is equivalent to the minimal $\gamma_i> 0$ such that the system is incremental dissipative for the supply rate $Q=-I_{n_x}, S=0$, and $R=\gamma_i^2I_{n_u}$. The system is motivated by the synchronization of two two-way coupled linear oscillators \cite{CoupledOsci}
\begin{align*}
\dot{x}_1 &= c_1x_1+c(x_1,x_2)+d_1,x_1\in[-0.7,0.7]\\
\dot{x}_2 &= c_2x_2-c(x_1,x_2)+c_uu+d_2,x_2\in[-0.7,0.7]
\end{align*}
with unknown coefficients $\begin{bmatrix}c_1 & c_2 & c_u\end{bmatrix}=\begin{bmatrix}-1 & -0.5 & 1\end{bmatrix}$, unknown nonlinear coupling $c(x_1,x_2)=0.3\sin(x_2-x_1)$, and unknown disturbances $d_{i},i=1,2$. Under the knowledge of the coefficients and the coupling, we determine the incremental $\mathcal{L}_2$-gain $1.93$ by Corollary~\ref{CoroIncDissi} for diagonal matrices $T_{i},i=1,2,$ and the fifth order TP of the system dynamics, i.e., $\tilde{\Sigma}_\omega=\{A^*\}$ for $\omega=0$. 
\begin{rmk}[Constant metric $\mathcal{M}$]
	We observe that if the system dynamics depend linearly on $u$, then a polynomial metric $\mathcal{M}(x)$ does not refine the result compared to a constant metric $\mathcal{M}$.
\end{rmk}

For the data-driven inference, we suppose the knowledge on the given structure above and $c(0,0)=0$. We draw $1000$ samples from a single trajectory with initial condition $x(0)=0$, input $u(t)=0.8\sin(0.03t^2+0.2t),t\geq0$, a sample rate of $0.1\,\text{s}$, and a bounded noise $||d_i(t)||_2\leq0.01||\dot{x}_{i}(t)||_2,i=1,2,t\geq0$. Additionally, we take $500$ snapshots into account, i.e., we evaluate the dynamics including noise at random states within $[-0.7,0.7]^2$ and zero input.\\\indent
Fig.~\ref{Fig.IncrL2} clarifies the influence of the bounds in Assumption~\ref{AssBoundDeri} on the data-driven inference on the incremental $\mathcal{L}_2$-gain obtained by Corollary~\ref{CoroIncDissi} with a constant metric $\mathcal{M}$ and TPs of order $5$.
\begin{figure}
	\centering
	\includegraphics[width=1\linewidth]{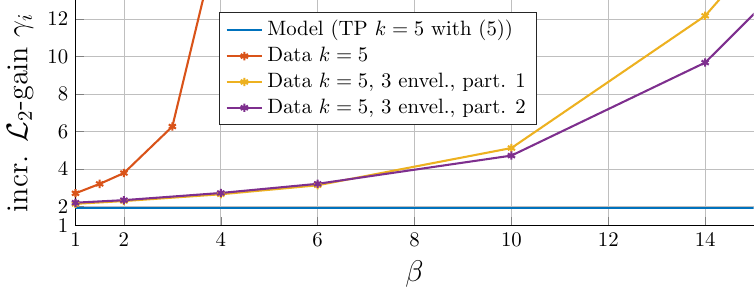}
	\caption{Data-based inference on the incremental $\mathcal{L}_2$-gain by TPs of order $5$.}
	\label{Fig.IncrL2}
\end{figure}
There, $M_{i,\alpha},i=1,2,|\alpha|=6,$ are given in view of the minimal upper bound $M^*_{i,\alpha}=0.3,i=1,2,|\alpha|=6$, i.e., $M_{i,\alpha}=\beta M^*_{i,\alpha}$.\\\indent
Since the inference by a single TP at $\omega=0$ is rather conservative and requires a precise knowledge of $M^*_{i,\alpha}$, we combine the local approximations of three envelopes to obtain the significantly improved conclusion of the third and fourth curve of Fig.~\ref{Fig.IncrL2}. To this end, we split for the former the state space into three regions $\mathbb{X}_1=[-0.7,-0.3]\times[-0.7,0.7],\mathbb{X}_2=[-0.3,0.3]\times[-0.7,0.7]$, and $\mathbb{X}_3=[0.3,0.7]\times[-0.7,0.7]$ and derive TPs at $\omega_1=\begin{bmatrix}-0.5 & 0 & 0\end{bmatrix}$, $\omega_2=0$, and $\omega_3=\begin{bmatrix}0.5 & 0 & 0\end{bmatrix}$. To reduce the computational effort, we exploit only the TP at $\omega_i$ to infer on the incremental property within subset $\mathbb{X}_i$. The common storage function ensures that the resulting incremental $\mathcal{L}_2$-gain holds for the whole operation set $[-0.7,0.7]^2$. We proceed analogously for the fourth curve but consider the partition regarding the $x_2$-coordinate and TPs at $\hat{\omega}_1=\begin{bmatrix}0 & -0.5 & 0\end{bmatrix}$, $\hat{\omega}_2=0$, and $\hat{\omega}_3=\begin{bmatrix}0 & 0.5 & 0\end{bmatrix}$.

\section{Conclusions}\label{SecConclusion}

We established a data-based characterization of nonlinear functions on the basis of Taylor polynomial approximation. To this end, a polynomial sector bound for nonlinear functions was received by means of Taylor's theorem and under a given bound on the magnitude of the $(k+1)$-th order partial derivatives. The latter can be deduced from data by a proposed validation procedure. Second, a set-membership for the unknown TP was developed based on noisy samples using ellipsoidal outer approximation. An extension of this approximation was proposed to incorporate prior knowledge on the nonlinear function. The combination of the polynomial sector bound and the set-membership achieved an envelope that is described by polynomials and contains the graph of the unknown nonlinear function. In an experimental example and a numerical example, we applied this envelope to determine non-incremental and incremental dissipativity properties solving an SOS optimization problem. Here a combination of multiple envelopes achieved significant improvements of the data-driven inference.\\\indent 
The presented set-membership perspective might stimulate further investigations of polynomial interpolation techniques for data-driven system analysis and control in future works.

\begin{IEEEbiography}[{\includegraphics[width=1.3in,height=1.25in,clip,keepaspectratio]{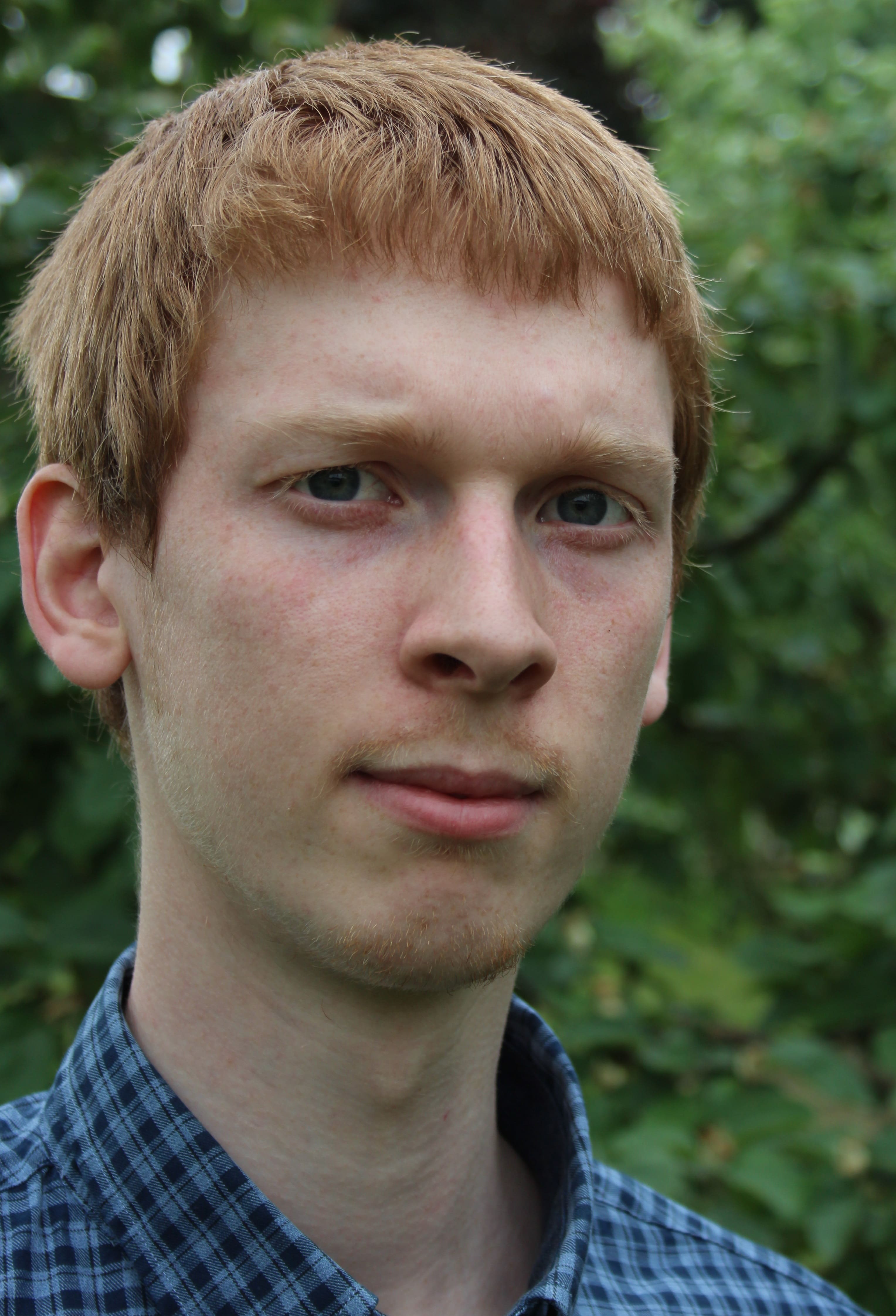}}] 
	{Tim Martin} (Graduate Student Member, IEEE) received the master’s degree in engineering cybernetics from the University of Stuttgart,
	Stuttgart, Germany, in 2018.\\\indent
	Since 2018, he has been a Research and Teaching Assistant with the Institute for Systems Theory and Automatic Control and a member of the Graduate School Simulation Technology, University of Stuttgart. His research interests include data-driven system analysis and control with focus on nonlinear systems.
\end{IEEEbiography}
\begin{IEEEbiography}[{\includegraphics[width=1.2in,height=1.25in,clip,keepaspectratio]{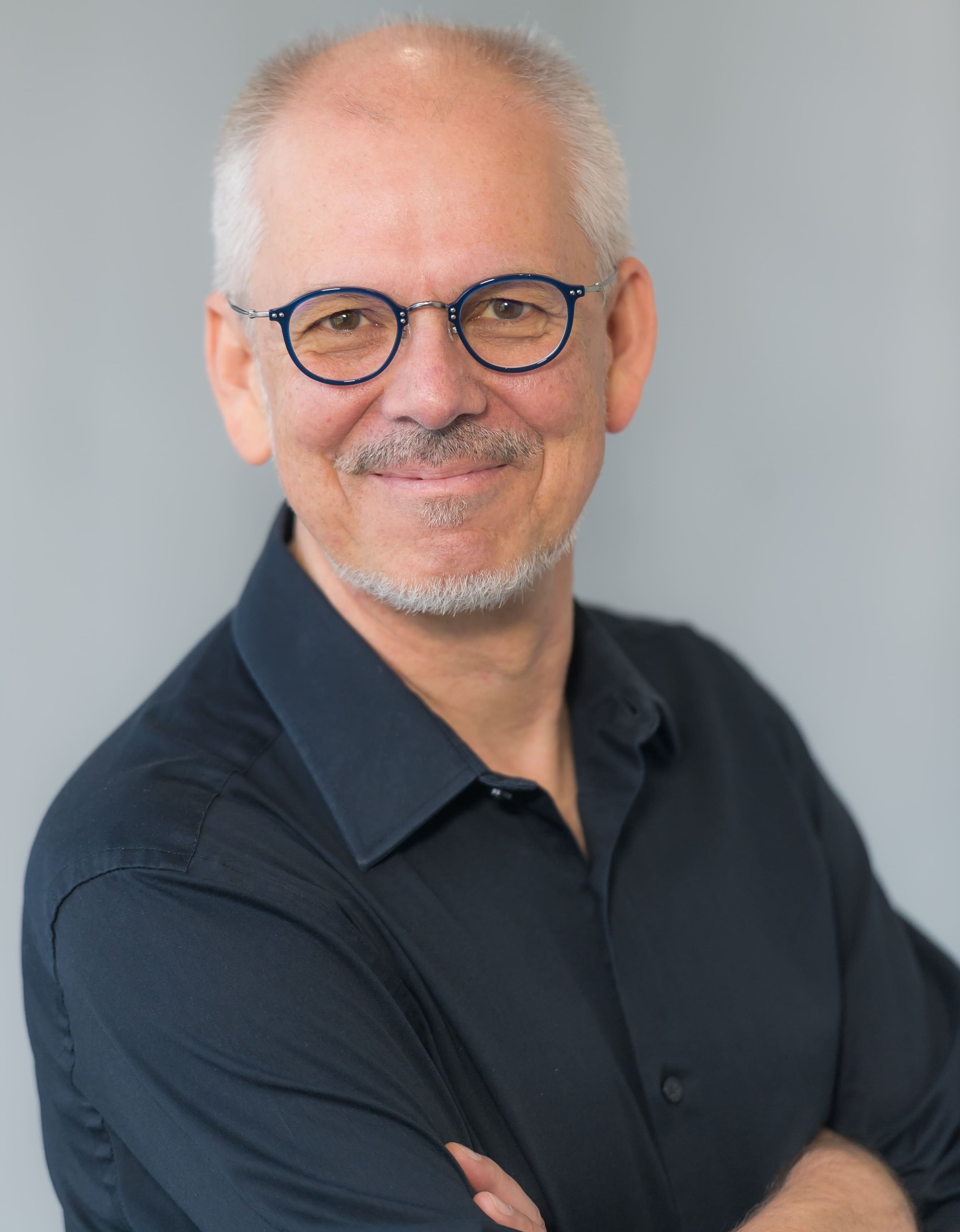}}]
	{Frank Allgöwer} (Member, IEEE) studied engineering cybernetics and applied mathematics in Stuttgart and with the University of California, Los Angeles (UCLA), CA, USA, respectively, and received the Ph.D. degree from the University of Stuttgart, Stuttgart, Germany.\\\indent
	Since 1999, he has been the Director of the Institute for Systems Theory and Automatic Control and a professor with the University of Stuttgart. His research interests include predictive control, data-based control, networked control, cooperative control, and nonlinear control with application to a wide range of fields including systems biology. \\\indent
	Dr. Allgöwer was the President of the International Federation of Automatic Control (IFAC) in 2017–2020 and the Vice President of the
	German Research Foundation DFG in 2012–2020.
\end{IEEEbiography}
\vfill

\end{document}